\renewcommand{\backref}[1]{}
\renewcommand{\backrefalt}[4]{%
\ifcase #1 %
\or 
[p.\ #2]%
\else 
[pp.\ #2]%
\fi}
\renewcommand{\paragraph}{%
  \@startsection{paragraph}{4}%
  {\z@}{2.25ex \@plus .5ex \@minus .3ex}{-1em}%
  {\normalfont\normalsize\bfseries}%
}
\newtheorem{theorem}{Theorem}
\newtheorem{lemma}[theorem]{Lemma}
\newtheorem{definition}[theorem]{Definition}
\theoremstyle{definition}
\newcommand{\eq}[1]{\hyperref[eq:#1]{(\ref*{eq:#1})}}
\renewcommand{\sec}[1]{\hyperref[sec:#1]{Section~\ref*{sec:#1}}}
\newcommand{\thm}[1]{\hyperref[thm:#1]{Theorem~\ref*{thm:#1}}}
\newcommand{\lem}[1]{\hyperref[lem:#1]{Lemma~\ref*{lem:#1}}}
\newcommand{\prop}[1]{\hyperref[prop:#1]{Proposition~\ref*{prop:#1}}}
\newcommand{\cor}[1]{\hyperref[cor:#1]{Corollary~\ref*{cor:#1}}}
\newcommand{\defn}[1]{\hyperref[def:#1]{Definition~\ref*{def:#1}}}
\newcommand{\fig}[1]{\hyperref[fig:#1]{Figure~\ref*{fig:#1}}}
\newcommand{\tab}[1]{\hyperref[tab:#1]{Table~\ref*{tab:#1}}}
\newcommand{\alg}[1]{\hyperref[alg:#1]{Algorithm~\ref*{alg:#1}}}
\newcommand{\app}[1]{\hyperref[app:#1]{Appendix~\ref*{app:#1}}}
\newcommand{\comment}[1]{}
\newcommand{\B}{\{0,1\}}
\newcommand{\CS}{\mathrm{CS}}
\newcommand{\UC}{\mathrm{UC}}
\newcommand{\Dcc}{D^\mathrm{cc}}
\newcommand{\X}{\mathcal{X}}
\newcommand{\Y}{\mathcal{Y}}
\newcommand{\tO}{\widetilde{O}}
\newcommand{\tOmega}{\widetilde{\Omega}}
\newcommand{\AND}{\textsc{And}}
\newcommand{\OR}{\textsc{Or}}
\newcommand{\BKK}{\textsc{Bkk}}
\newcommand{\ksum}{\textsc{k-sum}}
\newcommand{\bksum}{\textsc{bk-sum}}
\DeclareMathOperator{\poly}{poly}
\renewcommand{\th}[1]{${#1}^{\textrm{th}}$}
\renewcommand{\(}{\left(}
\renewcommand{\)}{\right)}
\newcommand{\be}{\begin{equation}}
\newcommand{\ee}{\end{equation}}
\def\ba#1\ea{\begin{align}#1\end{align}}
\begin{document}

\title{Nearly optimal separations between \\ communication (or query) complexity and partitions}

\author{
Robin Kothari\thanks{Center for Theoretical Physics, Massachusetts Institute of Technology. \texttt{rkothari@mit.edu}}
}

\date{}
\maketitle
\vspace{-1.05em}
\begin{abstract}

We show a nearly quadratic separation between deterministic communication complexity and the logarithm of the partition number, which is essentially optimal. This improves upon a recent power $1.5$ separation of G\"o\"os, Pitassi, and Watson (FOCS 2015). In query complexity, we establish a nearly quadratic separation between deterministic (and even {randomized}) query complexity and subcube partition complexity, which is also essentially optimal. 
We also establish a nearly power $1.5$ separation between quantum query complexity and subcube partition complexity, the first superlinear separation between the two measures.
Lastly, we show a quadratic separation between quantum query complexity and one-sided subcube partition complexity.

Our query complexity separations use the recent cheat sheet framework of Aaronson, Ben-David, and the author. Our query functions are built up in stages by alternating function composition with the cheat sheet construction. The communication complexity separation follows from ``lifting'' the query separation to communication complexity.
\end{abstract}

\section{Introduction}
\label{sec:intro}

\paragraph{Deterministic communication complexity.} In the standard model of communication complexity, we wish to compute a function $F:\X\times \Y \to \B$, where the  inputs $x\in\X$ and $y\in\Y$ are given to two different players, while minimizing the communication between the players. We use $\Dcc(F)$ to denote the deterministic communication complexity of $F$, the number of bits communicated in the worst case by the best deterministic protocol for the function $F$.

The partition number of $F$, denoted $\chi(F)$, is the least number of monochromatic rectangles in a partition or disjoint cover of $\X \times \Y$ (where a monochromatic rectangle is a set $A\times B$, with $A \subseteq \X$ and $B \subseteq \Y$, such that $F$ takes the same value on all elements of $A \times B$). Yao \cite{Yao79} observed that any $C$-bit communication protocol for $F$ partitions the set of all inputs $\X \times \Y$ into at most $2^C$  monochromatic rectangles, which gives us $\log\chi(F)\leq \Dcc(F)$. This turns out to be a powerful lower bound, and in fact almost all lower bound techniques for deterministic communication complexity, including the partition bound, discrepancy, fooling sets, (nonnegative) rank, and various norm-based methods \cite{JK10,JLV14,LS07}, actually lower bound $\log\chi(F)$.

In addition to being a fruitful lower bound technique, $\log\chi(F)$ also yields an upper bound on $\Dcc(F)$.  Aho, Ullman, and Yannakakis \cite{AUY83} showed that for all $F:\X\times \Y \to \B$, we have
\be 
\label{eq:AUY}
\Dcc(F) = O(\log^2\chi(F)).
\ee 
It has been a long-standing open problem to determine whether this upper bound can be improved (see, e.g., \cite[Open Problem 2.10]{KN06}). We show that the upper bound in \eq{AUY} is essentially optimal.

\begin{restatable}{theorem}{Dccvschi}
\label{thm:Dccvschi}
There exists a function $F:\X\times \Y \to \B$ with $\Dcc(F)\geq \bigl(\log\chi(F)\bigr)^{2-o(1)}$.
\end{restatable}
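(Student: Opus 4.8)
The plan is to first prove a query-complexity version of the statement and then ``lift'' it to communication complexity. Concretely: (1)~construct a Boolean function $f$ with $D(f)\ge\Dsc(f)^{2-o(1)}$, where $\Dsc$ denotes subcube-partition complexity; and (2)~apply the G\"o\"os--Pitassi--Watson lifting theorem to $f$ composed with a suitable gadget, using the fact that this theorem transfers $D$ to $\Dcc$ and $\Dsc$ to $\log\chi$ up to polylogarithmic factors. The upper bound \eq{AUY} shows the exponent $2-o(1)$ is essentially optimal, so no slack is being wasted.

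\textbf{The query function.} I would build $f$ using a small number of rounds of alternating function composition and the cheat-sheet construction of Aaronson, Ben-David, and the author. The starting point should have a large gap between deterministic query complexity and \emph{certificate} complexity, but no useful gap between deterministic query complexity and \emph{subcube-partition} complexity; the model example is the Tribes function $g_k:=\AND_k\circ\OR_k$, which has $D(g_k)=R(g_k)=\Theta(k^2)$ and $C(g_k)=\Theta(k)$ (one $1$ in each $\OR$-block, respectively one all-zero $\OR$-block), but for which $\Dsc(g_k)$ is polynomially larger than $C(g_k)$, since making the certificates unambiguous is expensive; so $g_k$ on its own is not a separation. Now take the cheat sheet $f:=g_k^{\CS}$, defined on a $\poly(k)$-times larger domain. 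The construction relies on two properties of $\CS$. First, it preserves hardness: $D(g_k^{\CS})\ge R(g_k^{\CS})=\tOmega(R(g_k))=\tOmega(k^2)$, by the cheat-sheet adversary argument (to evaluate $g_k^{\CS}$ one must identify the addressed cell, hence determine the values of all $\Theta(\log k)$ embedded copies of $g_k$). Second, it collapses the subcube-partition complexity to the certificate complexity of the base: $\Dsc(g_k^{\CS})=\tO(C(g_k))=\tO(k)$. The second property is the engine: a $1$-input of $g_k^{\CS}$ is certified by revealing the $\tO(k)$-bit addressed cell together with, for each embedded copy, the $\le C(g_k)$ coordinates named by the certificate that the cell itself supplies, and one checks that this family of subcubes---and a companion family for the $0$-inputs---can be completed to a genuine partition of the hypercube. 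Hence $D(f)=\tOmega(k^2)=\tOmega(\Dsc(f)^2)$, i.e.\ $D(f)\ge\Dsc(f)^{2-o(1)}$ as $k\to\infty$.

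\textbf{Lifting to communication.} I would then set $F:=f\circ h^n$, where $h$ is an index (or inner-product) gadget with block-length $b=\Theta(\log N)$, $N$ being the number of variables of $f$. The G\"o\"os--Pitassi--Watson lifting theorem gives $\Dcc(F)=\Omega(D(f)\cdot b)$ and $\log\chi(F)=\tO(\Dsc(f)\cdot b)$, hence $\Dcc(F)=\tOmega(k^2 b)$ while $\log\chi(F)=\tO(kb)$. Since $b=\Theta(\log N)=k^{o(1)}$ and the remaining polylogarithmic losses are likewise $k^{o(1)}$, they are absorbed into the exponent and $\Dcc(F)\ge\bigl(\log\chi(F)\bigr)^{2-o(1)}$, which matches \eq{AUY} up to the $o(1)$.

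\textbf{Main difficulty.} I expect the crux to be the bound $\Dsc(g_k^{\CS})=\tO(C(g_k))$: one must exhibit an explicit family of pairwise-disjoint monochromatic subcubes covering all of $\B^N$, each of codimension $\tO(C(g_k))$. The subtlety is the $0$-side and the ``garbage'' inputs---cells holding invalid data, embedded copies inconsistent with the certificates their cell supplies---which must each be placed in exactly one subcube; in particular one must avoid paying the (large) subcube-partition complexity of the base when determining the addressed cell. This is precisely the point at which the raw certificate complexity fails to be unambiguous, and it is what the cheat-sheet design is built to repair. A secondary task is to invoke the G\"o\"os--Pitassi--Watson lifting theorem with the right gadget and parameters, and to confirm that the polylogarithmic losses from both the cheat sheet and the lifting are subpolynomial in the final measures, hence genuinely $o(1)$ in the exponent.
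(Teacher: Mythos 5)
There is a genuine gap at the crux of your query-complexity step: the claim that a \emph{single} cheat-sheet application collapses the (two-sided) subcube-partition complexity to the certificate complexity of the base, i.e.\ $\Dsc(g_k^{\CS})=\tO(C(g_k))=\tO(k)$ for $g_k=\AND_k\circ\OR_k$. The cheat-sheet construction only repairs the $1$-side: what is known (and what the paper proves in \lem{cheat}) is $\UC_1(f_\CS)=O(C(f)\log^2 N)$, whereas $\UC_0(f_\CS)=O(\UC(f)\log^2 N)$ --- the $0$-side still inherits the \emph{unambiguous} complexity of the base, because an unambiguous $0$-certificate must (in the only known construction) pin down the address $\ell$ by unambiguously certifying the $c$ embedded copies. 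For Tribes, $\UC(g_k)=\Theta(k^2)$, so this gives only $\UC(g_k^{\CS})=\tO(k^2)$ against $D(g_k^{\CS})=\tOmega(k^2)$: no separation at all. You correctly flag the $0$-side and the garbage inputs as the main difficulty, but the assertion that ``the cheat-sheet design is built to repair'' this is exactly what fails; no argument is given that the $0$-inputs (e.g.\ inputs in which every cell is garbage, so the function is $0$ irrespective of the address) can be partitioned by subcubes of codimension $\tO(k)$ without effectively determining the address, and no such bound is known.

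This is precisely why the paper does not use a one-shot construction. It builds the function in stages, $f_0=\AND_n$ and $f_{k+1}=\AND_n\circ(\OR_n\circ f_k)_\CS$, using the composition bounds $\UC_0(\AND_n\circ h)\le \UC_0(h)+(n-1)\UC_1(h)$ and $\UC_1(\AND_n\circ h)\le n\,\UC_1(h)$: composing with $\AND_n$ outside the cheat sheet lets the cheap one-sided quantity $\UC_1(h_\CS)=\tO(C(h))$ be paid $n$ times while the expensive $\UC_0(h_\CS)=\tO(\UC(h))$ is paid once, so each round yields exponent $(2k+1)/(k+1)$ rather than $2$. The quadratic-up-to-$o(1)$ separation is only reached by letting $k$ grow slowly with $n$; a single (or any constant number of) cheat-sheet applications gives a strictly smaller exponent by these methods. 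Your one-shot claim would, if true, make the paper's iteration unnecessary --- a strong sign it needs a proof you have not supplied (compare \thm{QvsUC1}, where a single application suffices only because the comparison is against the one-sided measure $\UC_1$). The lifting step of your proposal (composing with an index gadget of block length $\Theta(\log N)$ and invoking the G\"o\"os--Pitassi--Watson simulation to get $\Dcc(F)=\Omega(D(f))$ and $\log\chi(F)=\tO(\UC(f))$) matches the paper and is fine, but it cannot rescue the missing query-complexity separation.
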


Until recently, the best known separation between the two measures was only by a factor of $2$ \cite{KLO99}.
Recently, G\"o\"os, Pitassi, and Watson \cite{GPW15} showed that there exists a function $F$ with $\Dcc(F) = \tOmega(\log^{1.5}\chi(F))$, where the notation $\tOmega(m)$ hides $\poly(\log m)$ factors. \tab{Dccvschi} summarizes known separations between $\Dcc$ and $\log\chi$.

\setlength{\tabcolsep}{5pt}
\renewcommand{\arraystretch}{1.3}

\begin{table}
\centering
\begin{minipage}[t]{0.6\textwidth}
 \centering
 \begin{tabular}{l@{\hskip 5em}l}\toprule
 Separation & Reference \\ \midrule
 $\Dcc(F) \geq 2 \log\chi(F)$ & \cite{KLO99} \\ 
 $\Dcc(F) = \tOmega\(\log^{1.5} \chi(F)\)$ & \cite{GPW15} \\ 
 $\Dcc(F) \geq \bigl(\log \chi(F)\bigr)^{2-o(1)}$ & \thm{Dccvschi}\\ \bottomrule
\multicolumn{2}{c}{$\Dcc(F) = O\bigl(\log \chi(F)^2\bigr)$ for all $F:\X \times \Y \to \B$}\\ \bottomrule
\end{tabular}
\caption{Known separations between deterministic communication complexity, $\Dcc(F)$, and partition number, $\chi(F)$.}
\label{tab:Dccvschi}
\end{minipage}
 \end{table}

\paragraph{Deterministic query complexity.} In the model of query complexity, we wish to compute a function $f:\B^n \to \B$ on an input $x\in \B^n$ given query access to the bits of the input, i.e., we can only access the input via a black box that accepts an index $i\in[n]$ (where $[n]:=\{1,2,\ldots,n\}$) and responds with $x_i\in\B$. The goal is to compute $f(x)$ while minimizing the number of queries made to the black box. Let $D(f)$ to denote the deterministic query complexity of $f$, the number of queries made by the best deterministic algorithm that computes $f$ correctly on all inputs.

As in communication complexity, most lower bounds for deterministic query complexity are based on the simple observation that any $d$-query algorithm computing $f$ partitions the domain $\B^n$ into at most $2^d$ monochromatic subcubes where each subcube fixes at most $d$ variables. A subcube is a restriction of the hypercube where some variables have been fixed, and it is monochromatic if $f$ takes the same value on all inputs in the subcube. This motivates defining the subcube partition complexity of $f$ as a smallest $d$ such that the domain $\B^n$ can be partitioned into at most $2^d$ monochromatic subcubes that each fix at most $d$ variables. Subcube partition complexity can also be viewed as an unambiguous version of certificate complexity as explained in \sec{prelim}, and hence we denote this measure $\UC(f)$.

Due to the observation above, we have $\UC(f) \leq D(f)$. It turns out that this lower bound is also relatively tight: for all $f:\B^n\to \B$ we have
\be 
\label{eq:DvsUC}
D(f) = O(\UC(f)^2).
\ee 

We show that this upper bound is essentially optimal.

\begin{restatable}{theorem}{DvsUC}
\label{thm:DvsUC}
There exists a total function $f$ with $D(f)\geq \UC(f)^{2-o(1)}$.
\end{restatable}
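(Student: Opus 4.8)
The plan is to construct $f$ by a staged procedure that interleaves function composition with the cheat sheet construction of Aaronson, Ben-David, and the author, amplifying a base separation in which deterministic query complexity is quadratic in certificate complexity into a nearly quadratic separation against $\UC$. Start from such a base function --- a balanced $\AND$-$\OR$ tree is the prototype, since it has $D(h)=\Theta(C(h)^2)$ --- and recursively set $f_1=h$ and $f_{i+1}=(h_i\circ f_i)_\CS$, where $h_i$ is a balanced $\AND$-$\OR$ tree on a carefully chosen number of variables, $\circ$ is function composition, and $(\cdot)_\CS$ is the cheat sheet transform. The final function is $f=f_{k(N)}$ for a number of stages $k(N)\to\infty$ chosen slowly enough to keep the input size polynomial (or at worst quasipolynomial).

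Through the recursion I would track the deterministic query complexity $D(f_i)$, the certificate complexity $C(f_i)$, and the subcube partition complexity $\UC(f_i)$ (with its one-sided versions $\UC_0,\UC_1$). Two properties drive the analysis. Composition is multiplicative up to constants --- $D(h\circ f)=\Theta(D(h)\,D(f))$, $C(h\circ f)=O(C(h)\,C(f))$, $\UC(h\circ f)=O(\UC(h)\,\UC(f))$, with matching lower bounds for $h$ an $\AND$-$\OR$ tree --- so it scales every measure while preserving the ratio $\log D/\log C$, which the base pins at $2$. The cheat sheet preserves query complexity up to polylogarithmic factors, $D(f_\CS)=\widetilde{\Theta}(D(f))$ --- the upper bound being ``solve each of the $\Theta(\log N)$ copies and read the indicated cheat-sheet cell'', the lower bound being the cheat sheet lower bound forcing an algorithm to evaluate essentially every copy --- while it pulls the subcube partition complexity down toward certificate complexity: the cell contents provide, at negligible cost, an unambiguous subcube cover of the $1$-inputs of width $\tO(C(f))$, and (with the construction arranged appropriately) of the $0$-inputs too, so that one application roughly halves the multiplicative gap between $\UC$ and $C$. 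Chaining, $\log D(f_i)/\log\UC(f_i)$ climbs toward $2$ with an intrinsic deficit of order $1/i$ from this geometric decay, plus compounding polylogarithmic overheads; balancing the stage count, the base sizes, and the per-stage blow-up makes both contributions $o(1)$, giving $D(f)\ge\UC(f)^{2-o(1)}$.

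The step I expect to be the crux is controlling the two-sided $\UC$ through the cheat sheet, particularly its $0$-side. Pointing to the cheat-sheet cell makes small unambiguous $1$-certificates automatic, but covering the $0$-inputs by disjoint small subcubes seems to require first identifying the pointed-to cell, hence an unambiguous certification of $f(x^{(i)})$ for every copy $x^{(i)}$ --- which would cost $\UC(f)$ rather than $C(f)$ per copy and wipe out the amplification. Getting around this forces careful engineering of the base function, the composed gadget, and the exact flavor of the cheat sheet so that the side not handled for free is already partition-friendly (for instance, so that a ``bad cell'' witnessing a $0$-input can be exhibited without fixing the entire pointer, or by alternating with a complementary cheat sheet between composition steps), and one must verify that this structure survives composition. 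A secondary, more mechanical obstacle is the accompanying error analysis: because the exponent $2$ is reached only in the limit, the number of stages must be traded off against the base sizes and the per-stage size growth so that the composition outgrows both the $\Theta(1/k)$ deficit and the $\poly(\log N)$ factors lost at each of the $\omega(1)$ cheat-sheet applications.
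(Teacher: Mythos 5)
There is a genuine gap, and it is exactly at the point you flag as the crux: your recursion $f_{i+1}=(h_i\circ f_i)_\CS$ never resolves the $0$-side, and as specified it fails there. The cheat sheet gives $\UC_1(g_\CS)=\tO(C(g))$ but only $\UC_0(g_\CS)=\tO(\UC(g))$, and for your gadget $h_i$ (a balanced $\AND$--$\OR$ tree on $m$ variables) one has $\UC(h_i\circ f_i)=\Theta(m\,\UC(f_i))$ while $D(h_i\circ f_i)=m\,D(f_i)$; so $\UC_0$ grows by the same factor $m$ per stage as $D$ does, the per-stage amplification vanishes, and iterating buys nothing beyond the base gap (and the base $\AND$--$\OR$ tree itself has $\UC=\Theta(D)$, not $\Theta(C)$). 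Your proposed escapes (certifying a ``bad cell'' without fixing the pointer, or a ``complementary cheat sheet'') are speculative and do not work as stated: to certify unambiguously that the addressed cell is bad one must pin down the address, i.e.\ unambiguously certify all $c$ copies, which is precisely the $\UC$-cost you are trying to avoid.

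The paper's resolution is different: it never tries to pull $\UC_0$ down to $C$ at all. The stage is $f_{k+1}=\AND_n\circ(\OR_n\circ f_k)_\CS$, with one gate \emph{inside} the cheat sheet and one \emph{outside}, and the induction tracks $C_0,C_1,\UC_0,\UC_1$ separately. The one-sided composition bounds are asymmetric --- $C_0(\AND_n\circ g)\le C_0(g)$ and $C_1(\OR_n\circ f)\le C_1(f)$ --- so the ordinary certificate complexity of the inner function $\OR_n\circ f_k$ grows by only a factor $n$ per stage ($C_0(f_k)=\tO(n^k)$, $C_1(f_k)=\tO(n^{k+1})$) even though $D$ gains $n^2$. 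The cheat sheet then converts this small $C$ into a small $\UC_1$ of $(\OR_n\circ f_k)_\CS$, and the $\UC_0$ side simply rides along at the full $\UC$ level: $\UC_0(f_{k+1})\le \UC_0\bigl((\OR_n\circ f_k)_\CS\bigr)+(n-1)\UC_1\bigl((\OR_n\circ f_k)_\CS\bigr)=\tO\bigl(n\,\UC(f_k)+n\,C(\OR_n\circ f_k)\bigr)=\tO(n^{k+2})$, which is exactly the allowed growth of one factor of $n$ per stage, since only the $\OR_n$ layer (not the outer $\AND_n$) sits inside the cheat sheet. This interleaving and the separate bookkeeping of the four one-sided measures is the missing idea; with it the recursion gives $D(f_k)=\tOmega(n^{2k+1})$ versus $\UC(f_k)=\tO(n^{k+1})$, and taking $k$ slowly growing yields the $2-o(1)$ exponent as you describe. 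Your high-level scaffolding (cheat sheets alternated with composition, exponent $(2k+1)/(k+1)\to 2$, slowly growing number of stages) matches the paper, but without this mechanism the proof does not go through.
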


The first separation between these two measures was a power $1.261$ separation by Savick\'{y}, which was recently improved by G\"o\"os, Pitassi, and Watson \cite{GPW15} to power $1.5$. \tab{DvsUC} summarizes known separations between these measures.

\setlength{\tabcolsep}{2pt}
\renewcommand{\arraystretch}{1.3}

 \begin{table}
\centering
\begin{minipage}[t]{0.47\textwidth}
\centering
 \begin{tabular}{l@{\hskip 5em}l}\toprule
 Separation & Reference \\ \midrule
 $D(f) = \Omega (\UC(f)^{1.261})$ & \cite{Sav02} \\ 
 $D(f) = \tOmega (\UC(f)^{1.5})$ & \cite{GPW15} \\ 
 $D(f) \geq \UC(f)^{2-o(1)}$ & \thm{DvsUC}\\ \bottomrule
\multicolumn{2}{l}{$D(f) = O(\UC(f)^2)$ for all $f:\B^n \to \B$} \\ \bottomrule
\end{tabular}
\caption{Known separations between deterministic query complexity, $D(f)$, and subcube partition complexity, $\UC(f)$.}
\label{tab:DvsUC}
\end{minipage}
\hspace{1.8em}
\begin{minipage}[t]{0.47\textwidth}
\centering
 \begin{tabular}{l@{\hskip 5em}l}\toprule
 Separation & Reference \\ \midrule
 $R(f) = \Omega (\UC(f)^{1.058})$ & \cite{KRS15} \\ 
 $R(f) = \tOmega (\UC(f)^{1.5})$ & \cite{GJPW15} \\ 
 $R(f) \geq \UC(f)^{2-o(1)}$ & \thm{RvsUC}\\ \bottomrule
\multicolumn{2}{l}{$R(f) = O(\UC(f)^2)$ for all $f:\B^n \to \B$} \\ \bottomrule
\end{tabular}
\caption{Known separations between randomized query complexity, $R(f)$, and subcube partition complexity, $\UC(f)$.}
\label{tab:RvsUC}
\end{minipage}
 \end{table}

\paragraph{Randomized query complexity.} 
We can extend the query model to allow randomized algorithms in the natural way. We define the bounded-error randomized query complexity of a function $f$, $R(f)$, to be the minimum number of queries needed in the worst case by a randomized algorithm that outputs $f(x)$ on input $x$ with probability at least $2/3$.

As before, almost all lower bound techniques for randomized query complexity are upper bounded by $\UC(f)$, as shown in \cite{KRS15}. This includes the partition bounds \cite{JK10,JLV14}, approximate polynomial degree \cite{NS95}, approximate nonnegative junta degree (also known as nonnegative literal degree or conical junta degree) \cite{KLdW15}, block sensitivity \cite{Nis91}, randomized certificate complexity or fractional block sensitivity \cite{Aar06,GSS13,Tal13}, and the classical analogue of the quantum adversary bound \cite{LM08,SS06, Aar08}.

Since we obviously have $R(f) \leq D(f)$, using \eq{DvsUC} we know that $R(f) = O(\UC(f)^2)$. We show that this upper bound is also essentially optimal.

\begin{restatable}{theorem}{RvsUC}
\label{thm:RvsUC}
There exists a total function $f$ with $R(f)\geq \UC(f)^{2-o(1)}$.
\end{restatable}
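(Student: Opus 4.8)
The plan is to show that the function constructed for \thm{DvsUC} — or a close variant built by the same recipe described in the introduction — also has near-maximal \emph{randomized} query complexity; since $R(f)\le D(f)$ always, such a statement would in fact re-prove \thm{DvsUC} as well. Recall that the construction produces $f$ in $t$ stages, each stage consisting of composing the current function with a low-complexity gadget $g$ and then applying the cheat-sheet operation of Aaronson, Ben-David, and the author, so schematically $f_{i+1}=(f_i\circ g_i)_{\mathrm{CS}}$. The separation $D(f)\ge \UC(f)^{2-o(1)}$ arises because (i) $D$ composes multiplicatively while $\UC$ composes only sub-multiplicatively, so each composition stage widens the ratio $D/\UC$, and (ii) the cheat sheet leaves $D$ essentially unchanged (up to polylogarithmic factors) but collapses $\UC$ down to roughly the ordinary certificate complexity $C(f_i\circ g_i)$, which is much smaller than $\UC(f_i\circ g_i)$; iterating and letting $t\to\infty$ slowly drives the exponent towards $2$.

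To obtain \thm{RvsUC} it therefore suffices to check that $R(f_i)$ never drops more than a polylogarithmic factor below $D(f_i)$ along this recursion. Two ingredients are needed. First, a randomized composition lower bound $R(f\circ g)=\tOmega(R(f)\cdot R(g))$ for the specific gadgets $g_i$ used — we do not need the general (open) randomized composition statement, only the version for our gadgets, which have enough internal structure (e.g.\ a balanced certificate structure, or being built from block-$k$-sum type primitives) that the bound can be proved directly via a hard-distribution/direct-sum argument. Second, the cheat-sheet lower bound $R(f_{\mathrm{CS}})=\tOmega(R(f))$: any bounded-error algorithm for $f_{\mathrm{CS}}$ must effectively locate the correct cheat-sheet cell, whose address is determined by $\Theta(\log M)$ independent copies of $f$ while a random wrong cell is rejected with high probability, so this costs $\tOmega(R(f))$ queries — here one must take the number of cells $M$ super-polynomially larger than the query budget so that guessing a cell is useless. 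Combining the two facts gives $R(f_{i+1})=\tOmega(R(f_i)\cdot R(g_i))$, which by induction tracks the growth of $D(f_i)$ up to polylogarithmic factors, hence $R(f)\ge \UC(f)^{2-o(1)}$.

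I expect the main obstacle to be the randomized composition step: unlike $D$, which composes exactly, multiplicativity of $R$ is unavailable in general, so the argument must be tailored to the gadgets, exhibiting for the composed function a hard input distribution on which no $o(R(f)R(g))$-query randomized algorithm succeeds, typically by simultaneously embedding hard instances of $f$ and of $g$ and arguing via a direct-sum-type lower bound. A secondary technical point is controlling the accumulation of polylogarithmic losses across the $\omega(1)$ stages required to move the exponent from its base value to $2-o(1)$: choosing the number of stages to grow slowly with $n$ (so that each gadget is polynomially smaller than the whole function) keeps the total loss to $n^{o(1)}$, which is absorbed into the $o(1)$ in the exponent.
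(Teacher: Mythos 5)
Your proposal takes essentially the same route as the paper: the paper proves this theorem with the very construction used for \thm{DvsUC} (the staged function $f_k=\AND_n\circ(\OR_n\circ f_{k-1})_\CS$ with $k$ slowly growing in $n$), and obtains the randomized lower bound from exactly your two ingredients --- a gadget-specific composition bound $R(\AND_n\circ f),\,R(\OR_n\circ f)=\Omega(nR(f))$ (cited from G\"o\"os, Jayram, Pitassi, and Watson rather than reproved) and the cheat-sheet bound $R(f_\CS)=\Omega(R(f)/\log^2 N)$ from Aaronson, Ben-David, and Kothari --- so that $R$ tracks $D$ up to polylog factors through the recursion. Only cosmetic details differ: your schematic omits the outer $\AND_n$ applied after each cheat-sheet step (the cheat sheet collapses $\UC_1$, not $\UC_0$, to $\tO(C)$, which is precisely why that outer $\AND_n$ is needed), and the gadgets here are $\AND_n/\OR_n$, not block-$k$-sum primitives (those appear only in the quantum separation).
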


The first asymptotic separation between these measures was a power $1.058$ separation by Racicot-Desloges, Santha, and the author \cite{KRS15}, which was later improved by  G\"o\"os, Jayram, Pitassi, and Watson  \cite{GJPW15} to a power $1.5$ separation. \tab{RvsUC} summarizes the known separations between these measures.

\paragraph{Quantum query complexity.} 
The query model can also be naturally extended to quantum algorithms. We denote by $Q(f)$ the bounded-error quantum query complexity of $f$, the minimum number of queries made in the worst case by a quantum algorithm that outputs $f(x)$ on input $x$ with probability at least $2/3$. (See \cite{BdW02} for a formal definition.)

As before, since $Q(f) \leq D(f)$, using \eq{DvsUC} we know that $Q(f) = O(\UC(f)^2)$. However, prior to our work no function was known for which $Q(f) \gg \UC(f)$ was known. Furthermore, the functions previously used to show separations between $D(f)$ or $R(f)$ and $\UC(f)$ do not separate $Q(f)$ from $\UC(f)$. Indeed, even the functions constructed to prove \thm{DvsUC} and \thm{RvsUC} do not separate $Q(f)$ from $\UC(f)$. Despite this, we give the first superlinear separation between $Q(f)$ and $\UC(f)$.

\begin{restatable}{theorem}{QvsUC}
\label{thm:QvsUC}
There exists a total function $f$ with $Q(f) \geq \UC(f)^{1.5-o(1)}$.
\end{restatable}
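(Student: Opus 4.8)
The plan is to run the cheat-sheet strategy behind \thm{DvsUC} and \thm{RvsUC} with a quantum-hard amplifier in place of the \AND{}-\OR{} trees that suffice there. I want two facts about the cheat-sheet operation $g \mapsto g_{\CS}$ of Aaronson, Ben-David, and the author. First, it barely changes quantum query complexity: a quantum algorithm computes the $\Theta(\log m)$ inner copies of $g$ to locate the addressed cheat-sheet cell and then Grover-searches that cell for an inconsistency, so $Q(g_{\CS}) = \tO(Q(g) + \sqrt{C(g)})$, which is $\tO(Q(g))$ whenever $C(g) = O(Q(g)^2)$ --- comfortably true below. Second, it collapses subcube partition complexity down to (essentially) ordinary certificate complexity, $\UC(g_{\CS}) = \tO(C(g))$, since an unambiguous certificate may read the addressed cell first and then merely verify the certificates that cell names --- so the cell itself removes all ambiguity. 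Granting this, it suffices to build a \emph{total} function $g$ with $Q(g) \ge C(g)^{3/2 - o(1)}$; then $f := g_{\CS}$ has $\UC(f) = \tO(C(g)) \le Q(g)^{2/3+o(1)} = \tO(Q(f)^{2/3+o(1)})$, i.e.\ $Q(f) \ge \UC(f)^{3/2-o(1)}$. We land at exponent $3/2$ rather than the $2$ of \thm{DvsUC} because the amplifier we can afford already enjoys a quadratic quantum speedup over its deterministic complexity, so $Q$ sits near $C^{3/2}$ rather than near $D = C^{2}$.

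For $g$ I would use a ``block $k$-sum'' gadget \bksum{}: an \AND{} of $m$ independent \ksum{} instances, each on $n$ numbers from a range of bit-length $b \approx k\log n$, with $m \approx n/k$ and $k$ growing slowly. The parameters are chosen so that the two-sided certificate complexity is balanced, $C_0(\bksum) = C_1(\bksum) = \tO(\sqrt N)$ (one \ksum{}-solution per block on the $1$-side; one all-elements block certificate on the $0$-side), and so that $D(\bksum) = \tilde\Theta(N) = \tilde\Theta(C(\bksum)^{2})$. The quantum query complexity is pinned down by composing the adversary lower bound $Q(\ksum_n) = \tOmega(n^{k/(k+1)})$ with the \AND{} bound via the exact composition of the negative-weight adversary, giving $Q(\bksum) = \tilde\Theta(\sqrt m\cdot n^{k/(k+1)})$, which as $k \to \infty$ tends to $\tilde\Theta(n^{3/2}) = \tilde\Theta(N^{3/4}) = \tilde\Theta(C(\bksum)^{3/2})$. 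Taking $f = \bksum_{\CS}$ and using $Q(g_{\CS}) = \tOmega(Q(g))$ (a standard property of the construction) then gives $Q(f) = \tOmega(N^{3/4})$ while $\UC(f) = \tO(\sqrt N)$, i.e.\ $Q(f) \ge \UC(f)^{3/2-o(1)}$.

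The step I expect to be the main obstacle is controlling $\UC$ on the \emph{rejecting} side of the cheat sheet. Naively, certifying that a $0$-input of $g_{\CS}$ is rejected requires pinning down its cheat-sheet address unambiguously, and hence an unambiguous certificate for each inner copy of $g$ --- but for the \ksum{}-based gadget both $\UC_0$ and $\UC_1$ are as large as $n$ (certifying ``no earlier solution'', or ``no solution at all''), not $O(k)$, so the clean bound $\UC(g_{\CS}) = \tO(C(g))$ needs a cheat-sheet variant in which the rejecting witness --- the address, or the location of the invalid cell --- is pinned down cheaply and unambiguously, e.g.\ by having each cell carry ordinary certificates for all inner instances and pointing to the lexicographically first inconsistency. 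Making this go through cleanly is, I believe, exactly what forces the construction to proceed in stages, alternating composition (to rebalance and re-inflate the measures each step disturbs) with further cheat sheets. The secondary obstacle, in contrast to $D$ and $R$, is the quantum \emph{lower} bound surviving every composition and every cheat-sheet step: I would work throughout with the negative-weight adversary bound, which composes exactly and passes through the cheat-sheet construction, and check that the $\sqrt{C(g)}$ term in the cheat-sheet upper bound never dominates $Q(g)$.
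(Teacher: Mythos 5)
Your pivotal claim $\UC(g_\CS)=\tO(C(g))$ is exactly where the argument breaks, and you half-noticed it yourself. What the cheat-sheet construction actually gives (the paper's \lem{cheat}) is one-sided: $\UC_1(f_\CS)=O(C(f)\log^2 N)$, but $\UC_0(f_\CS)=O(\UC(f)\log^2 N)$ --- to reject unambiguously one must pin down the address $\ell$ unambiguously, which forces unambiguous certificates for the $c$ inner copies of $g$. For your amplifier (an $\AND$ of $\ksum$ blocks) both $\UC_0(g)$ and $\UC_1(g)$ are essentially linear in the input size, so $\UC(g_\CS)$ is about $N$, which swamps $Q(g_\CS)=\tOmega(N^{3/4})$ and leaves no separation at all. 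Your single-shot construction thus only proves a separation against the one-sided measure $\UC_1$ --- and indeed that is how the paper proves \thm{QvsUC1}, taking $\BKK_\CS$ with $Q=\tOmega(n^2)$ and $\UC_1=\tO(n)$; note the available amplifier has $Q\approx C^2$, so the eventual exponent $1.5$ is not caused by the amplifier sitting at $C^{3/2}$ as you suggest, but by how $\UC$ behaves through the construction. Your proposed patch (cells carrying ordinary certificates for all inner instances and pointing to the lexicographically first inconsistency) does not restore unambiguity: distinct candidate certificates can claim different addresses or different ``first'' inconsistencies, and excluding those is precisely what costs unambiguous inner certificates.

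The paper's actual resolution is the staged construction you only gestured at, and the mechanism matters. Set $f_1=\AND_n\circ\BKK_\CS$ and $f_{k+1}=\AND_n\circ(\bksum_n\circ f_k)_\CS$. The outer $\AND_n$ tames the rejecting side: $\UC_0(\AND_n\circ h)\le \UC_0(h)+(n-1)\UC_1(h)$, and the dominant term $n\,\UC_1(h)$ with $h=(\bksum_n\circ f_k)_\CS$ is $\tO(n\,C(\bksum_n\circ f_k))$, governed by ordinary certificate complexity, while the residual $\UC_0(h)=\tO(\UC(\bksum_n\circ f_k))\le\tO(n\,\UC(f_k))$ stays controlled by the induction hypothesis. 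The second essential ingredient is the asymmetric composition property $C(\bksum_n\circ f)=O(nC_0(f)+C_1(f)\log^3 n)$ (\lem{ksum}), which keeps $C$, and hence the next level's $\UC_1$ after the cheat sheet, at $\tO(n^{k+1})$, while $Q$ gains a factor $n^{1.5}$ per stage ($\sqrt n$ from $\AND_n$, $n$ from $\bksum_n$). This gives $Q(f_k)=\tOmega(n^{1.5k+1})$ against $\UC(f_k)=\tO(n^{k+1})$, and the exponent $1.5-o(1)$ comes from letting $k$ grow slowly with $n$, not from a single cheat-sheet application over a $k$-sum gadget with large range parameter. Your quantum lower-bound concerns, by contrast, are unproblematic: all that is used is $Q(f_\CS)=\Omega(Q(f))$ and composition of $Q$ with $\AND_n$ and $\bksum_n$, both of which hold as you expect.
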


We are also able to show an improved separation between quantum query complexity and one-sided subcube partition complexity, denoted by $\UC_1(f)$, which is similar to subcube partition complexity except that we only need to partition the $1$-inputs using monochromatic subcubes. 

For this measure, the quadratic upper bound $D(f) = O(\UC_1(f)^2)$ still holds \cite[Proposition 5]{Goo15}, and hence $Q(f) = O(\UC_1(f)^2)$. We show this upper bound is optimal up to log factors, qualitatively improving upon \cite{GPW15} and \cite{GJPW15} who proved the same result for deterministic and randomized query complexity respectively.

\begin{restatable}{theorem}{QvsUCone}
\label{thm:QvsUC1}
There exists a total function $f$ with $Q(f) = \tOmega(\UC_1(f)^{2})$.
\end{restatable}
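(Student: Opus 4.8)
The plan is to use the cheat sheet framework. Recall that for many measures, the cheat sheet of a function behaves predictably: if $f$ has a (bounded-error) quantum algorithm of cost $Q$ and a small one-sided subcube partition (i.e., small $\UC_1$), then $f_{\mathrm{CS}}$ typically has quantum query complexity $\tOmega(Q)$ while its one-sided subcube partition complexity does not blow up much. The idea is to start with a base function $g$ that already exhibits a linear-versus-square-root-type gap between the relevant quantities, and then apply function composition and the cheat sheet construction a bounded number of times so that the gap is amplified to nearly quadratic. Concretely, I would look for a function $g$ with $Q(g) = \tOmega(\UC_1(g)^2)$ already proved in prior work — and indeed, the natural candidate is to take the $\UC_1$-based separating function underlying \cite{GPW15} or \cite{GJPW15} (which gives a power-$1.5$ or power-$2$ separation for deterministic/randomized query complexity against $\UC_1$) and observe that, because $\UC_1$ is a \emph{one-sided} measure, the lower bound on the cheat-sheet side can be driven by a quantum lower bound rather than only a classical one.

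The key steps, in order, would be: (1) Identify the gadget. I would use a function built from a composition of an $\AND$-type or $\OR$-type outer function with an inner function whose $\UC_1$ is small but whose quantum query complexity is large — for instance, something like a ``block'' version of a search/address problem where certifying a $1$-output costs little (small one-sided certificates, unambiguously) but a quantum algorithm must still pay near-linearly. (2) Control $\UC_1$ under the operations: show that one-sided subcube partition complexity composes nicely (i.e., $\UC_1(f \circ g) = O(\UC_1(f)\cdot\UC_1(g))$ or similar) and, crucially, that the cheat sheet construction only multiplies $\UC_1$ by a polylogarithmic factor — this uses that a $1$-input of a cheat sheet function has a short unambiguous certificate consisting of a $1$-certificate of the base function plus the ``cells'' that verify it. (3) Control $Q$ under the operations: use the known composition property of bounded-error quantum query complexity, $Q(f \circ g) = \Theta(Q(f)\cdot Q(g))$ up to the usual success-amplification logarithms, together with the cheat-sheet lower bound technique of Aaronson--Ben-David--Kothari showing $Q(f_{\mathrm{CS}}) = \tOmega(Q(f))$ even when one only assumes hardness of the base function (not of some auxiliary structure). (4) Iterate: alternate composition and cheat-sheeting $O(1/\epsilon)$ times to push the exponent from the base ratio up to $2 - \epsilon$ for every $\epsilon > 0$, then take $\epsilon \to 0$ slowly with $n$ to get the $\tOmega(\UC_1(f)^2)$ (with the implicit polylog losses absorbed into the $\tOmega$). (5) Verify totality is preserved at every stage.

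The main obstacle I expect is step (3) combined with step (4): ensuring that the quantum lower bound survives \emph{both} the cheat sheet construction and the function composition simultaneously, and that the polylogarithmic overheads incurred at each of the (super-constantly many, if we want $2-o(1)$) iterations do not accumulate into a polynomial loss that would kill the bound. The cheat-sheet quantum lower bound is delicate because the adversary (or polynomial) argument must handle the ``pointer-chasing'' structure, and one must be careful that after composing with the outer amplifier the hard instances of the base function remain embedded in a way the lower bound can detect. A secondary subtlety is that $\UC_1$, being one-sided, does not obviously behave as well under all the operations as two-sided $\UC$ does (in particular the cheat sheet of a function may mix up which side is ``easy''), so I would need a lemma specifically establishing that the one-sided partition complexity of the cheat sheet function is $\tO(\UC_1(\text{base}))$, handling the $0$-inputs of the cheat sheet (which are the majority) via a short unambiguous cover that exploits the array structure. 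If a clean such lemma holds, the rest is bookkeeping; if not, one may need to switch to a variant of the cheat sheet tailored to preserve one-sidedness.
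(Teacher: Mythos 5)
There is a genuine gap, and it centers on the role the cheat sheet plays for this particular theorem. The paper's proof is a single application of the cheat sheet to one fixed base function: take $f=\BKK$ from \cite{ABK15}, which has $Q(\BKK)=\tOmega(n^2)$ but ordinary (two-sided) certificate complexity $C(\BKK)=\tO(n)$ (\lem{bkk}), and observe from \lem{cheat} that $Q(\BKK_\CS)=\Omega(Q(\BKK))=\tOmega(n^2)$ while $\UC_1(\BKK_\CS)=O(C(\BKK)\log^2 N)=\tO(n)$. The key point your proposal misses is that the cheat sheet construction converts \emph{ordinary} certificates of the base into \emph{unambiguous one-sided} certificates of $f_\CS$: the unambiguous $1$-certificate is just the contents of the addressed cell together with the input bits those certificates point to, so the bound is in terms of $C(f)$, not $\UC_1(f)$, of the base. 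Your step (2) instead asks for a lemma of the form $\UC_1(f_\CS)=\tO(\UC_1(f))$, which would force you to start from a base that already has small $\UC_1$ and large $Q$ --- but no such function was known (that is exactly what this theorem establishes), so the plan is circular as stated. Relatedly, your proposed base functions (those underlying \cite{GPW15,GJPW15}) do not work: as the paper notes, the functions previously used to separate $D$ or $R$ from $\UC$ do not separate $Q$ from $\UC$, so their cheat sheets cannot carry a quantum lower bound of the required strength. Also, your concern about covering the $0$-inputs of $f_\CS$ is moot, since $\UC_1$ only requires an unambiguous cover of the $1$-inputs.

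A second, quantitative problem: the statement to be proved is $Q(f)=\tOmega(\UC_1(f)^2)$, i.e., a full quadratic separation up to polylogarithmic factors. Your step (4) --- iterating composition and cheat sheets $O(1/\epsilon)$ times and letting $\epsilon\to 0$ slowly with $n$ --- is the paper's strategy for the $2-o(1)$ and $1.5-o(1)$ separations (Theorems \ref{thm:DvsUC}, \ref{thm:RvsUC}, \ref{thm:QvsUC}), but it can only yield an exponent $2-o(1)$ with $n^{o(1)}$ losses, not $\tOmega(\UC_1(f)^2)$; the accumulated per-stage overheads are not polylogarithmic once the number of stages grows with $n$. The correct fix is to drop the iteration entirely: with the right base function ($\BKK$) and the lemma $\UC_1(f_\CS)=O(C(f)\log^2 N)$, one cheat-sheet application suffices, and the composition machinery (your steps (1), (3), (4)) is only needed for \thm{QvsUC}, not for this statement.
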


\section{High-level overview}

We now provide a high-level overview of the separations shown.

\paragraph{Deterministic communication complexity.} We prove \thm{Dccvschi} by showing the analogous separation in query complexity (\thm{DvsUC}) and ``lifting'' the result to communication complexity, which is also the  strategy used in \cite{GPW15}. Essentially, the deterministic simulation theorem of \cite{GPW15} provides a black-box way of converting a query separation between $D(f)$ and $\UC(f)$ to a separation between $\Dcc(F)$ and $\log\chi(F)$. The theorem weakens the separation by log of the input size of $f$, but with a suitable choice of parameters this is negligible compared to the $o(1)$ term in the separation.

\paragraph{Deterministic query complexity.}
To prove \thm{DvsUC}, we use the recently introduced cheat sheet framework \cite{ABK15} and the commonly used technique of function composition. Before describing the construction, we need to define some notation. For any functions $f:\B^n\to\B$ and $g:\B^m\to \B$, we define the composed function $f \circ g$ to be the function on $mn$ bits whose output on $y = (y_{11},\ldots, y_{1m}, \ldots, y_{n1},\ldots, y_{nm})$ is $f(g(y_{11},\ldots, y_{1m}),\ldots ,g(y_{n1}, \ldots , y_{nm}))$. Let $\AND_n$ and $\OR_n$ denote the $\AND$ and $\OR$ function on $n$ bits respectively. 
For any function $f$, we use $f_\CS$ to denote the ``cheat sheet version'' of $f$, a new total Boolean function constructed from $f$. (We review the cheat sheet framework in \sec{cheat}.)

The usual application of function composition (see \cite{Tal13} for examples) is to construct a constant-sized function $f$ with desirable properties and then compose $f$ with itself several times. Our construction does not use this strategy. Instead,  we construct our function in stages starting with the function $f_0 = \AND_n$ that achieves no separation between $D(f)$ and $\UC(f)$. We then compose the function with $\OR_n$, construct the cheat sheet version, and then compose with $\AND_n$, to obtain the function $f_1 = \AND_n\circ (\OR_n \circ \AND_n)_\CS$, which achieves a power $3/2$ separation between $D(f)$ and $\UC(f)$. Repeating this construction once more yields $f_2 = \AND_n\circ (\OR_n \circ \AND_n\circ (\OR_n \circ \AND_n)_\CS)_\CS$, which achieves a power $5/3$ separation, and so on. The function $f_k$ achieves a $(2k+1)/(k+1)$ separation, which yields a $2-o(1)$ separation if we choose $k$ to be a slow growing function of $n$.

\paragraph{Randomized query complexity.}
The function constructed above also yields the separation in \thm{RvsUC} with slightly worse parameters. The analysis of the constructed function is similar since deterministic and randomized query complexities behave similarly with respect to the cheat sheet technique and with respect to composition with the $\AND$ and $\OR$ functions.

\paragraph{Quantum query complexity.}
Lastly, we establish the quantum separations  using two functions introduced by Aaronson, Ben-David and the author \cite{ABK15}: the \textsc{Block-$k$-sum-of-$k$-sums} function, which we denote $\BKK$, and the \textsc{Block-$k$-sum} function, which we denote $\bksum$. The function $\BKK_\CS$ yields the separation in \thm{QvsUC1}. The separation in \thm{QvsUC} requires a function constructed in stages again. The first function is $f_1 = \AND \circ \BKK_\CS$, which achieves a power $5/4$ separation, the next is $f_2=\AND_n \circ (\bksum_n \circ f_1)$, which achieves a power $4/3$ separation and so on. The function $f_k$ achieves a power $(3k+2)/(2k+2)$ separation.

\section{Preliminaries}
\label{sec:prelim}

\paragraph{Communication complexity.} The only communication complexity measures we need are $\Dcc(F)$ and $\chi(F)$, which were defined in \sec{intro}. The interested reader is referred to \cite{KN06,Juk12} for more formal definitions of these measures.

\paragraph{Query complexity.} For more formal definitions of measures introduced in \sec{intro}, the reader is referred to the survey by Buhrman and de Wolf \cite{BdW02}. The only measure not covered in the survey is subcube partition complexity, which is explained in detail in \cite{KRS15}.

Subcube partition complexity can also be viewed as unambiguous certificate complexity and we use this perspective in this paper. To explain this, let us begin with certificate complexity. 

A certificate for an input $x\in \B^n$ is a subset $S\subseteq [n]$ of indices and claimed values for these bits, such that $x$ is consistent with the certificate and any input $y$ consistent with the certificate satisfies $f(x)=f(y)$. In other words, a certificate for $x$ is a partial assignment of bits consistent with $x$ such that any other string consistent with this partial assignment has the same function value as $x$. 
For $b\in\B$, the $b$-certificate complexity of $f$, denoted $C_b(f)$, is the size of the smallest certificate for $x$ maximized over all inputs with $f(x)=b$. The certificate complexity of $f$, $C(f)$, is defined as $C(f) := \max\{C_0(f),C_1(f)\}$. Alternately, $C_1(f)$ is the smallest $w$ such that $f$ can be written as a width-$w$ DNF, i.e., a DNF in which each term contains at most $w$ variables. Similarly, $C_0(f)$ corresponds to CNF width.

Unambiguous certificate complexity is defined similarly, except we require the set of certificates to be unambiguous, i.e., at most one certificate from the set of all certificates should work for a given input. In other words, the unambiguous $1$-certificate complexity of $f$ is the minimum $w$ such that $f$ can be written as a width-$w$ DNF in which at most one term evaluates to $1$ on any input. Similar to certificate complexity, we denote unambiguous $b$-certificate complexity by $\UC_b(f)$ and define $\UC(f) := \max\{\UC_0(f),\UC_1(f)\}$. Clearly, since unambiguous certificates are more restricted than certificates, we have for $b \in \B$,  $C_b(f) \leq \UC_b(f)$ and $C(f) \leq \UC(f)$.

For example, consider the $\OR_n$ function on $n$ bits defined as $\bigvee_{i\in[n]} x_i$. Clearly $C_0(\OR_n)=n$ since we must examine all $n$ bits to be sure that all $x_i=0$. On the other hand, $C_1(\OR_n)=1$ since the location of any $1$ in the input is a certificate. 
Obviously $\UC_0(\OR_n)$ remains $n$. However, a single $1$ in the input is not an unambiguous $1$-certificate since inputs with multiple $1$s would have multiple valid certificates. In other words, although $\bigvee_{i\in[n]} x_i$ is a valid DNF representation of $\OR_n$, it is not  unambiguous since several terms can simultaneously be $1$. So consider the following  DNF:
\begin{equation}
\label{eq:ordnf}
\OR_n(x)=x_1 \vee \overline{x_1}x_2 \vee \overline{x_1}\overline{x_2} x_3 \vee \cdots \vee \overline{x_1x_2\cdots x_{n-1}}x_n
\end{equation}
This DNF is unambiguous since any term evaluating to $1$ prevents other terms from evaluating to $1$.
Thus we have $\UC_1(\OR_n)\leq n$. Although this result in trivial because $\UC_1(f)\leq n$ for any $n$-bit function $f$, this DNF representation of $\OR_n$ will be useful to us later later because it has the property that every unambiguous certificate has only one unnegated index $x_i$.

\paragraph{Composition theorems.} Composition theorems relate the complexity of composed functions with the complexities of the individual functions. For example, for all Boolean functions $f$ and $g$, $D(f\circ g)=D(f)D(g)$ \cite{Tal13,Mon14}. In our construction we will repeatedly compose functions with $\AND_n$ and $\OR_n$, and hence we need to understand the complexities of the resulting functions.

\begin{lemma}[AND/OR composition]
\label{lem:composition}
For any total Boolean function $f$, the following bounds hold:

\vspace{0.5em}
\centering
\begin{minipage}[t]{0.5\textwidth}
\begin{itemize}[noitemsep]
\item $D(\AND_n\circ f) = n D(f)$
\item $R(\AND_n\circ f)=\Omega(n R(f))$
\item $Q(\AND_n\circ f)=\Omega(\sqrt{n} Q(f))$
\item $C_0(\AND_n \circ f) \leq C_0(f)$
\item $C_1(\AND_n \circ f) \leq n C_1(f)$
\item $\UC_0(\AND_n \circ f) \leq \UC_0(f)+(n-1)\UC_1(f)$
\item $\UC_1(\AND_n \circ f) \leq n \UC_1(f)$
\end{itemize}
\end{minipage}%
\begin{minipage}[t]{0.5\textwidth}
\begin{itemize}[noitemsep]
\item $D(\OR_n\circ f) = n D(f)$
\item $R(\OR_n\circ f)=\Omega(n R(f))$
\item $Q(\OR_n\circ f)=\Omega(\sqrt{n} Q(f))$
\item $C_0(\OR_n\circ f) \leq nC_0(f)$
\item $C_1(\OR_n \circ f) \leq C_1(f)$
\item $\UC_0(\OR_n \circ f) \leq n\UC_0(f)$
\item $\UC_1(\OR_n \circ f) \leq (n-1)\UC_0(f)+\UC_1(f)$
\end{itemize}
\end{minipage}
\end{lemma}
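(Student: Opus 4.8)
The plan is to reduce the seven $\OR_n$ bounds to the seven $\AND_n$ bounds by De Morgan duality and then establish the $\AND_n$ bounds. Writing $\bar f := 1-f$, we have $\OR_n\circ f = \neg(\AND_n\circ \bar f)$; since $D$, $R$, and $Q$ are invariant under negating the output or any input bit, while $C_0(g)=C_1(\bar g)$, $\UC_0(g)=\UC_1(\bar g)$, $C_b(\bar f)=C_{1-b}(f)$, and $\UC_b(\bar f)=\UC_{1-b}(f)$, substituting these identities converts each entry of the right column into the correspondingly subscripted entry ($0\leftrightarrow 1$) of the left column. So it suffices to prove the seven left-column bounds, which split into three routine groups and one step that needs a new idea.

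For the query bounds, $D(\AND_n\circ f)=n\,D(f)$ is the instance $h=\AND_n$ of the composition theorem $D(h\circ g)=D(h)\,D(g)$ stated above, using $D(\AND_n)=n$, and $R(\AND_n\circ f)=\Omega(n\,R(f))$ and $Q(\AND_n\circ f)=\Omega(\sqrt n\,Q(f))$ follow the same way from the corresponding randomized and quantum composition statements specialized to the outer function $\AND_n$, using $R(\AND_n)=\Theta(n)$ and $Q(\AND_n)=\Theta(\sqrt n)$ (the quantum bound via Grover search together with the fact that the negative-weight adversary bound composes). For the certificate bounds, a single $0$-certificate of $f$ on any block that evaluates to $0$ certifies $\AND_n\circ f=0$, giving $C_0(\AND_n\circ f)\le C_0(f)$, while concatenating a $1$-certificate of $f$ on each of the $n$ blocks gives $C_1(\AND_n\circ f)\le n\,C_1(f)$. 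For $\UC_1(\AND_n\circ f)\le n\,\UC_1(f)$, take the width-$\UC_1(f)$ unambiguous $1$-DNF of $f$ and form the ``product'' DNF whose terms are the conjunctions obtained by choosing one term from each of the $n$ block copies; on a $1$-input the unique satisfied term is the conjunction of the $n$ individually-unique satisfied terms, on a $0$-input no term is satisfied, so this DNF computes $\AND_n\circ f$ unambiguously and has width at most $n\,\UC_1(f)$.

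The one step requiring a new idea — and, I expect, the main obstacle — is $\UC_0(\AND_n\circ f)\le \UC_0(f)+(n-1)\UC_1(f)$: naively putting a $0$-certificate on one block and $1$-certificates on the rest is ambiguous, since a $0$-input of $\AND_n\circ f$ may have several $0$-blocks. The fix is to certify through the \emph{first} $0$-block. For each $i\in[n]$, each choice of a term of the unambiguous $1$-DNF of $f$ for every block $j<i$, and each choice of a term of the unambiguous $0$-DNF of $f$ (i.e.\ the unambiguous $1$-DNF of $\bar f$) for block $i$, include the conjunction of these $i$ terms, leaving blocks $>i$ unconstrained. This conjunction forces $f$ on block $i$ to be $0$, so the resulting DNF computes $\AND_n\circ f$; and on a $0$-input the least $0$-block $i$ is unique while the relevant terms on blocks $1,\dots,i$ are then forced by the unambiguity of the two DNFs of $f$, so exactly one of the conjunctions is satisfied — hence the family is unambiguous, and each term has width at most $(i-1)\,\UC_1(f)+\UC_0(f)\le (n-1)\,\UC_1(f)+\UC_0(f)$. (This is exactly the structure of the unambiguous DNF for $\OR_n$ in \eq{ordnf}, the case $f=\mathrm{id}$ on one bit.) Finally, the two $\OR_n$-column unambiguous bounds follow from the left column by the duality above, and correspond respectively to the product construction ($\UC_0(\OR_n\circ f)\le n\,\UC_0(f)$) and the ``first $1$-block'' construction ($\UC_1(\OR_n\circ f)\le (n-1)\,\UC_0(f)+\UC_1(f)$).
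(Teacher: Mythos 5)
Your proof is correct and takes essentially the same route as the paper: the query bounds are quoted from the same known composition results (the randomized $\AND$-outer bound being the theorem of G\"o\"os--Jayram--Pitassi--Watson that the paper cites), and the certificate and unambiguous-certificate bounds use the same direct constructions, including the ``first $0$-block'' unambiguous DNF modeled on \eq{ordnf}. The only cosmetic difference is that the paper proves the $\AND$ column and dismisses the $\OR$ column with ``similar reasoning,'' whereas you make that step explicit via De Morgan duality.
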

\begin{proof}
We prove the claims for the function $\AND_n\circ f$. Similar reasoning proves the analogous claims for the function $\OR_n \circ f$.

The first property follows from the fact that $D(f\circ g) = D(f) D(g)$ for any Boolean functions $f$ and $g$~\cite{Tal13,Mon14}. $R(\AND_n\circ f)=\Omega(n R(f))$ was recently proved by \cite{GJPW15}. $Q(\AND_n\circ f)=\Omega(\sqrt{n} Q(f))$ because $Q(f\circ g) = \Theta(Q(f) Q(g))$ for any Boolean functions $f$ and $g$ \cite{HLS07,Rei11,LMR+11} and we know that $Q(\AND_n)=Q(\OR_n)=\Theta(\sqrt{n})$ \cite{Gro96,BBBV97}.

We have $C_0(\AND_n \circ f) \leq C_0(f)$ since a $0$-certificate for $\AND_n$ is a $0$-input to it, which corresponds to an instance of $f$ that evaluates to $0$. On the other hand, by certifying that all $n$ instances of $f$ evaluate to $1$, we can certify $\AND_n \circ f$ evaluates to $1$, and hence $C_1(\AND_n \circ f) \leq n C_1(f)$.

We can unambiguously certify that $\AND_n\circ f$ evaluates to $0$ by unambiguously certifying the value of the first (from the left)  $1$-input to the $\AND_n$ gate and unambiguously certifying that all previous inputs are $0$.
This is the same idea used to construct the unambiguous DNF for $\OR_n$ in \eq{ordnf}. This construction gives $\UC_0(\AND_n \circ f) \leq \UC_0(f)+(n-1)\UC_1(f)$. We can unambiguously certify that $\AND_n\circ f$ evaluates to $1$ by providing unambiguous $1$-certificates for all $n$ instances of $f$. This gives $\UC_1(\AND_n \circ f) \leq n \UC_1(f)$.
\end{proof}

\section{Cheat sheet framework}
\label{sec:cheat}

We now overview the recently introduced cheat sheet framework \cite{ABK15}. The framework as presented in \cite{ABK15} is more general and can fulfill different objectives such as making partial functions total. We present a restricted version of the framework that only works for total functions. We use the framework because it makes $1$-certificates unambiguous in a natural way.

\begin{definition}[Cheat sheet version of a total function]
\label{def:cheat}
Let $f:\B^N\to\B$ be a function, $c = 10 \log N$ and $m=10 C(f)\log^2 N$. 
Then the \emph{cheat sheet version of $f$}, denoted $f_\CS$, is a total function
\begin{equation}
f_\CS:(\B^N)^c\times(\B^{m})^{2^c}\to\B.
\end{equation}
Let the input be written as $(x^1,x^2,\ldots, x^c,Y_1,Y_2,\ldots, Y_{2^c})$, where for all $i\in [N]$, $x^i \in \B^N$ and for all $j\in[2^c]$, $Y_j \in \B^{m}$. 
Let $\ell_i = f(x^i)$ and $\ell\in[2^c]$ be the positive integer corresponding to the binary string $\ell_1\ell_2\ldots\ell_c$. Then we define the value of $f_\CS(x^1,x^2,\ldots, x^c,Y_1,Y_2,\ldots, Y_{2^c})$ to be $1$ if and only if $Y_\ell$ contains certificates for $f(x^i) = \ell_i$ for all $i\in[c]$.
\end{definition}

Informally, the cheat sheet construction takes any total function $f$ and converts it into a new total function $f_\CS$ in the following way. An input to the new function $f_\CS$ first contains $c=10\log N$ inputs to $f$ and then a vast array of size $2^c$ of cells of size $m$ bits. The outputs of these $c$ inputs to $f$ is a bit string $\ell_1\ell_2\ldots\ell_c$  of length $c$ that represents an integer $\ell \in [2^c]$ in the natural way. We treat this integer $\ell$ as an address into this array of size $2^c$ and say that these $c$ inputs to $f$ point to the \th{\ell} cell of the array. At the \th{\ell} cell of the array we require certificates certifying that this was indeed the cell pointed to by the $c$ inputs to $f$. In other words, we require certificates certifying that $f(x^i)$, the output of $f$ acting on the \th{i} input, is indeed equal to $\ell_i$ for all $i \in [c]$. Since a certificate for a single $f$ consists of $C(f)$ pointers to the input, a certificate is of size $C(f)\log N$ bits, and hence $c$ certificates are of size $m=C(f)c\log N = 10 C(f) \log^2 N$.
The function $f_\CS$ is defined to be $1$ if and only if the input satisfies this property, i.e., if the cell pointed to by the $c$ instances does indeed contain certificates certifying it is the correct cell.

This construction preserves the complexity of $f$ with respect to some measures. For example, $D(f_\CS)$ equals $D(f)$ up to log factors. The upper bound uses the natural algorithm for $f_\CS$: the deterministic algorithm first computes the $c$ copies of $f$ on inputs $x^1$ to $x^c$ and finds the cell pointed to by these $c$ inputs. Then it checks if the certificates in this cell certify that this is the right cell. This requires $cD(f)$ queries to compute the $c$ copies, $m$ queries to read the contents of the cell and $cC(f)$ queries to check if the certificates are all correct. Overall this uses $O(cD(f))$ queries. We also have $D(f_\CS) = \Omega(D(f))$, because intuitively if an algorithm cannot compute $f$ it has no hope of finding the cheat sheet since that would require solving $c$ copies of $f$ or searching in an array of size $n^{10}$. Similarly, many measures behave as expected under cheat sheets, and we show this below.

\begin{lemma}[Complexity of cheat sheet functions]
\label{lem:cheat}
For any total function $f:\B^N\to \B$, if $f_\CS:\B^{N'}\to\B$ denotes the cheat sheet version of $f$ as defined in \defn{cheat}, then we have the following upper and lower bounds:
\begin{itemize}[noitemsep,topsep=5pt]
\item $D(f_\CS)=\Omega(D(f))$
\item $R(f_\CS)=\Omega(R(f)/\log^2 N)$
\item $Q(f_\CS)=\Omega(Q(f))$
\item $C_0(f_\CS) = O(C(f)\log^2 N)$
\item $C_1(f_\CS) = O(C(f) \log^2 N)$
\item $\UC_0(f_\CS) = O(\UC(f) \log^2 N)$
\item $\UC_1(f_\CS) = O(C(f) \log^2 N)$
\item $N' = O(N^{12})$
\end{itemize}
\end{lemma}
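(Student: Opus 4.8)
The plan is to split the statement into three groups: the input-size bound, the four certificate-complexity upper bounds, and the three query lower bounds. The input-size bound $N' = O(N^{12})$ is a direct count: the input of $f_\CS$ consists of $c = 10\log N$ blocks of $N$ bits together with $2^c = N^{10}$ cells of $m = 10 C(f)\log^2 N \le 10 N\log^2 N$ bits each, so $N' = cN + 2^c m \le 10 N\log N + 10 N^{11}\log^2 N = O(N^{12})$.

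For the certificate upper bounds I would fix once and for all both a minimum-size certificate scheme and a minimum-size \emph{unambiguous} certificate scheme for $f$, and then exhibit explicit certificates for $f_\CS$. For a $1$-input of $f_\CS$, take the union of a $C(f)$-size certificate inside each copy $x^i$ (which already fixes the address $\ell$) together with the $c$ slots of the cell $Y_\ell$ that are supposed to encode those $c$ certificates; this is $O(cC(f)\log N) = O(C(f)\log^2 N)$ bits, giving $C_1(f_\CS) = O(C(f)\log^2 N)$. The point of the construction is that this certificate is \emph{unambiguous even when the underlying $f$-certificates are not}: any input consistent with two such $f_\CS$-certificates must have the same address $\ell$ and the same contents of the slots of $Y_\ell$, which literally spell out the certificates being used, so the two $f_\CS$-certificates coincide; hence $\UC_1(f_\CS) = O(C(f)\log^2 N)$. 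For a $0$-input, fix a certificate inside each $x^i$ (determining $\ell$), read the entire cell $Y_\ell$, and additionally fix the positions of each $x^i$ referenced by the certificate purportedly encoded in slot $i$ of $Y_\ell$; then whichever slot is responsible for $f_\CS = 0$ — either because its encoded certificate is inconsistent with its copy, or because it fails to force the claimed value — stays responsible on every consistent input. Using arbitrary $f$-certificates this costs $O(C(f)\log N + m) = O(C(f)\log^2 N)$ bits and gives $C_0(f_\CS) = O(C(f)\log^2 N)$; using the unambiguous $f$-certificate scheme (so that the certificates determining $\ell$ are themselves canonical) makes the resulting $f_\CS$-certificate unambiguous, at the cost of $\UC(f)$ rather than $C(f)$ in the copies, giving $\UC_0(f_\CS) = O(\UC(f)\log^2 N)$.

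For the query lower bounds I would formalize the intuition that computing $f_\CS$ forces one to locate the address cell among the $2^c = N^{10}$ cells, which in turn forces solving a copy of $f$. For $D(f_\CS)\ge D(f)$: suppose a deterministic algorithm $A$ computes $f_\CS$ with $T < D(f)$ queries, and run it against an adversary that answers queries into $x^i$ by an independent copy of an optimal adversary for $f$ (so $f(x^i)$ stays undetermined, since fewer than $D(f)$ queries land in $x^i$) and answers cell queries arbitrarily. Afterwards the address can be steered to any $\ell\in\B^c$, in particular to one of the $\ge 2^c - T > 0$ cells $A$ never queried; call it $\ell^\ast$. Completing the $x^i$ to inputs that realize $\ell^\ast$ and carry valid certificates, and filling $Y_{\ell^\ast}$ either with those certificates (so $f_\CS = 1$) or with a non-certificate (so $f_\CS = 0$), produces two inputs consistent with $A$'s transcript on which $f_\CS$ differs — a contradiction, so $D(f_\CS)\ge D(f)$. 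The bounds $R(f_\CS) = \Omega(R(f)/\log^2 N)$ and $Q(f_\CS) = \Omega(Q(f))$ follow by the same argument with the randomized and quantum adversary methods respectively (equivalently, they are the cheat-sheet lower bounds of \cite{ABK15} specialized to total functions), the $\log^2 N$ loss in the randomized case arising from amplifying the error so that all $c$ planted copies of $f$ are solved correctly.

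I expect the main obstacle to be the $\UC_0$ bound: one must choose the $f_\CS$-certificate for a $0$-input in a genuinely canonical way — reading all of $Y_\ell$ and all referenced positions of the copies and invoking the fixed unambiguous scheme for $f$ — and then verify that no input can be consistent with two such certificates. A secondary point needing care is the randomized lower bound, where the error probabilities must be tracked through the reduction carefully enough to keep the overall loss down to the stated $\poly\log N$ factor.
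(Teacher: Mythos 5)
Your proposal is correct and takes essentially the same route as the paper: the same input-size count, the same explicit certificate constructions (cell contents plus the input bits they reference for $C_1$ and $\UC_1$, with unambiguity coming from the cell literally spelling out the certificates; ordinary, respectively unambiguous, certificates for the $c$ copies plus the full cell and referenced bits for $C_0$ and $\UC_0$), and reliance on the cheat-sheet lower bounds of \cite{ABK15} for $D$, $R$, and $Q$, exactly as the paper does (it cites Lemmas 21, 6, and 12 there rather than reproving them). The only caveat is that your remark that the $R$ and $Q$ bounds follow ``by the same argument'' with randomized/quantum adversaries understates matters---the quantum cheat-sheet lower bound in \cite{ABK15} is a substantially harder hybrid-style argument---but since you, like the paper, ultimately invoke those results as black boxes, this does not affect correctness.
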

\begin{proof}
We have $D(f_\CS)=\Omega(D(f))$ \cite[Lemma 21]{ABK15}, 
$R(f_\CS)=\Omega(R(f)/\log^2 N)$ \cite[Lemma 6]{ABK15}, and 
$Q(f_\CS)=\Omega(Q(f))$ \cite[Lemma 12]{ABK15}.

We have $C_0(f_\CS) = O(C(f) \log^2 N)$ because a valid $0$-certificate for $f_\CS$ can first certify the $c$ outputs to $f$, which requires $O(cC(f))$ queries. This points to a cell $\ell$. The certificate can then contain the contents of cell $\ell$ of size $O(C(f)\log^2 N)$ and the locations pointed to (and the bits contained at these locations) by the certificates in cell $\ell$.  After querying this cell and all the locations pointed to by the certificates in this cell, it can be determined with no further queries if this cell is incorrectly filled. We have $C_1(f_\CS) = O(C(f) \log^2 N)$ since the location of the correct cell and the pointers within that cell along with the bits they point to forms a $1$-certificate.

We have $\UC_0(f_\CS) = O(\UC(f) \log^2 N)$ using the same argument as for certificate complexity. We first certify the $c$ outputs to $f$ unambiguously using unambiguous certificates of size $\UC(f)$. This points to a cell $\ell$. The  certificate also contains the contents of cell $\ell$ and the locations pointed to (and the bits at these locations) by the certificates in cell $\ell$. This certificate is unambiguous because this certificate evaluating to true prevents any other certificate from evaluating to true.  To see this, note that if another certificate tries to certify a different value of $\ell$ then this will be an invalid certificate. If the certificate claims the same value of $\ell$, then it must use the same certificates for the $c$ instances of $f$  because we used unambiguous certificates and hence there is only one valid certificate for each $f(x^i)=\ell_i$. Now if the other certificate has the same value of $\ell$ but different claimed values for the contents of the \th{\ell} cell or the locations pointed to by the cell, this will be inconsistent with the actual input since our original certificate was consistent with the input. 

We have $\UC_1(f_\CS) = O(C(f) \log^2 N)$. For this case an unambiguous certificate will contain only the contents of cell $\ell$ and the locations pointed to by the certificates in cell $\ell$ along with the bits contained at these locations. This is identical to the $1$-certificate we constructed above. Since this is clearly a valid certificate, we only need to show it is unambiguous, i.e., that if this certificate evaluates to true, all other certificates must fail. If another certificate has a different value of $\ell$, then its contents will not be able to certify that the output of the $c$ functions equals $\ell$ and the certificate will be rejected. On the other hand, if the other certificate has the same value of $\ell$ but different claimed values for the contents of the cell or the locations pointed to by the cell, this will be inconsistent with the input since our original certificate was consistent with the input. 

Lastly, we need to upper bound the input size of $f_\CS$. From \defn{cheat} we know the input size is $cN+m2^c = 10N\log N+ 10 N^{10} C(f)\log^2 N = O(10N^{11}\log^2 N)=O(N^{12})$.
\end{proof}

\section{Randomized query complexity vs.\ subcube partitions and deterministic communication vs.\ partition number}
\label{sec:rquery}

\paragraph{Randomized query complexity vs.\ subcube partitions.}
We now establish the following theorem which implies \thm{RvsUC}, which in turn implies \thm{DvsUC}.

\begin{theorem}
\label{thm:main}
For every $k \geq 0$, there exists a total Boolean function $f_k:\B^{N_k}\to\B$, such that 
$R(f_k)=\tOmega(n^{2k+1})$ and $\UC(f_k) = \tO(n^{k+1})$. Hence there is a function $f$ with $R(f)\geq \UC(f)^{2-o(1)}$.
\end{theorem}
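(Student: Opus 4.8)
The plan is to construct the functions $f_k$ by induction, alternating the operations ``compose with $\OR_n$ on the outside'', ``take the cheat sheet version'', and ``compose with $\AND_n$ on the outside'', starting from the base case $f_0 = \AND_n$. Concretely, set $g_{k} = \OR_n \circ f_{k-1}$, then $h_k = (g_k)_\CS$, then $f_k = \AND_n \circ h_k$, so that $f_k = \AND_n \circ (\OR_n \circ f_{k-1})_\CS$. I would track two quantities through the induction: the randomized query complexity $R(f_k)$ (the ``large'' measure) and the full certificate and unambiguous-certificate complexities $C_0,C_1,\UC_0,\UC_1$ of $f_k$ (the ``small'' measures). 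One has to carry all four of $C_0,C_1,\UC_0,\UC_1$ rather than just $\UC$ because the recursions in \lem{composition} and \lem{cheat} mix them: e.g.\ $\UC_0(\AND_n\circ f)$ depends on $\UC_1(f)$, and $\UC_1(f_\CS)$ depends on $C(f)$ not $\UC(f)$.

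The key steps, in order. First, verify the base case: $f_0 = \AND_n$ has $R(\AND_n)=\Theta(n)$ and $\UC_0(\AND_n),\UC_1(\AND_n),C_0(\AND_n),C_1(\AND_n)$ all $O(n)$ (trivially $\le n$). Second, run the lower bound recursion for $R$: by the $\OR$/$\AND$ clauses of \lem{composition}, $R(\OR_n\circ f)=\Omega(nR(f))$ and $R(\AND_n\circ f)=\Omega(nR(f))$, and by \lem{cheat}, $R(f_\CS)=\Omega(R(f)/\log^2 N)$ where $N$ is the input size of $f$. So each ``stage'' (compose with $\OR_n$, cheat-sheet, compose with $\AND_n$) multiplies $R$ by $\Omega(n^2)$ while only dividing by a polylogarithmic factor; inductively $R(f_k)=\tOmega(n^{2k+1})$ provided the accumulated input sizes stay at most $\poly(n)$ for $k$ not too large. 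Third, run the upper bound recursion for the certificate measures: starting from $O(n)$, composing with $\OR_n$ multiplies $\UC_0$ by $n$ and keeps $\UC_1$ essentially fixed (and dually for $C_0,C_1$); the cheat-sheet operation via \lem{cheat} turns everything into $O(\UC(f)\log^2 N)$ or $O(C(f)\log^2 N)$, i.e.\ multiplies by $\tO(1)$; composing with $\AND_n$ then multiplies $\UC_1$ by $n$ while keeping $\UC_0$ essentially fixed. Net effect per stage: each of $C_0,C_1,\UC_0,\UC_1$ grows by a factor $\tO(n)$, so after $k$ stages $\UC(f_k)=\tO(n^{k+1})$. Fourth, bound the input size: \lem{cheat} gives $N'=O(N^{12})$ and composition with $\AND_n$ or $\OR_n$ multiplies the input size by $n$, so the input size $N_k$ of $f_k$ is $n^{O(12^k)}$; this is still $\poly(n)$ for fixed $k$, which is all that's needed for the $\poly(\log N)$ losses in \lem{cheat} to be absorbed into the $\tOmega/\tO$ notation. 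Fifth, optimize $k$: we get $R(f_k)/\UC(f_k)^2 = \tOmega(n^{2k+1})/\tO(n^{k+1})^2 = \tOmega(n^{2k+1}/n^{2k+2}) = \tOmega(1/n)$, which is not yet a separation — so instead write the separation as an exponent: $R(f_k)\ge \UC(f_k)^{\alpha_k}$ with $\alpha_k = \frac{2k+1 - o(1)}{k+1}$ (the $o(1)$ absorbing the polylog factors, which are $\poly(\log n)$ hence negligible against powers of $n$). As $k\to\infty$, $\alpha_k\to 2$. Choosing $k = k(n)$ growing very slowly with $n$ — slowly enough that $n^{O(12^{k(n)})}$, the input size, and hence the polylog losses, remain subpolynomial corrections — gives a single infinite family with $R(f)\ge \UC(f)^{2-o(1)}$, where the $o(1)$ is as a function of the final input size.

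The main obstacle I anticipate is bookkeeping the interdependence of the four certificate measures cleanly through the composed recursion: one must check that $\UC_1$ never gets inflated by an $\OR_n$ composition (it's controlled by $\UC_0$ there, which is why the $\AND$/$\OR$ alternation matters) and, crucially, that $\UC_1(f_\CS) = O(C(f)\log^2 N)$ uses only $C$, not $\UC$, of the inner function — so the asymmetry between $C$ and $\UC$ has to be tracked, and one needs $C(f_k)$ to stay $\tO(n^{k+1})$ as well. A secondary but important point is confirming that the slowly-growing $k(n)$ can be chosen so that the double-exponential-in-$k$ blowup of the input size still leaves the separation exponent converging to $2$; this is a soft asymptotic argument (take, say, $k(n) = \log\log n$ or even slower), but it needs to be stated so that the final ``$o(1)$'' is honestly a function of the true input length. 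The randomized lower bound itself is essentially free given the cited composition theorem $R(\AND_n\circ f)=\Omega(nR(f))$ from \cite{GJPW15} and $R(f_\CS)=\Omega(R(f)/\log^2 N)$ from \cite{ABK15}, so no new query-complexity technique is required — the content is entirely in the construction and the accounting.
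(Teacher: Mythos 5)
Your construction and overall plan are exactly the paper's: $f_0=\AND_n$, $f_k=\AND_n\circ(\OR_n\circ f_{k-1})_\CS$, an induction tracking $R$ together with $C_0,C_1,\UC_0,\UC_1$ via \lem{composition} and \lem{cheat}, an input-size bound of $n^{O(1)}$ for fixed $k$, and a slowly growing $k(n)$ at the end. The one genuine problem is in your per-stage accounting of the small measures. Your summary ``each of $C_0,C_1,\UC_0,\UC_1$ grows by a factor $\tO(n)$ per stage'' does not follow from the rules you state combined with your base case, in which all four measures of $\AND_n$ are only bounded by $O(n)$: per stage one has $\UC_1(f_{k+1})\le n\,\UC_1\bigl((\OR_n\circ f_k)_\CS\bigr)=\tO\bigl(n\,C(\OR_n\circ f_k)\bigr)=\tO\bigl(n\max\{nC_0(f_k),C_1(f_k)\}\bigr)$, and similarly for $C_1$ and $\UC_0$; so if $C_0(f_k)$ sits at the same level as $C_1(f_k)$, the growth factor is $n^2$, not $n$. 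To close the induction at the stated bound $\UC(f_k)=\tO(n^{k+1})$ you must carry the sharper invariant $C_0(f_k)=\tO(n^k)$, one power of $n$ below the other three. This comes from $C_0(\AND_n)=1$ in the base case together with $C_0(\AND_n\circ h)\le C_0(h)$ (the outer $\AND$ never inflates $C_0$), and it is exactly what keeps $C(\OR_n\circ f_k)=\tO(n^{k+1})$, hence $\UC_1$ after the cheat-sheet step at $\tO(n^{k+1})$, before the final factor of $n$ from the outer $\AND$. (Relatedly, ``$\OR_n$ keeps $\UC_1$ essentially fixed'' and ``$\AND_n$ keeps $\UC_0$ essentially fixed'' are not what \lem{composition} gives -- those bounds are $(n-1)\UC_0+\UC_1$ and $\UC_0+(n-1)\UC_1$ respectively -- though in this particular recursion those terms happen not to dominate.)

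The slip is not fatal to the final conclusion: even with the crude base case the recursion self-corrects after one stage and yields $\UC(f_k)=\tO(n^{k+2})$ against $R(f_k)=\tOmega(n^{2k+1})$, whose exponent $(2k+1)/(k+2)$ still tends to $2$, so $R(f)\ge \UC(f)^{2-o(1)}$ survives; but the intermediate claim $\UC(f_k)=\tO(n^{k+1})$, which is part of the theorem statement, needs the $C_0$ asymmetry made explicit, which is precisely how the paper's induction hypothesis is set up ($C_0(f_k)=\tO(n^k)$ versus $\tO(n^{k+1})$ for $C_1,\UC_0,\UC_1$). Everything else -- the $R$ lower bound via $R(\AND_n\circ f)=\Omega(nR(f))$, $R(\OR_n\circ f)=\Omega(nR(f))$ and $R(f_\CS)=\Omega(R(f)/\log^2 N)$, and the choice of a slowly growing $k(n)$ so the polylog and constant losses are absorbed into the $o(1)$ -- matches the paper's proof.
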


\begin{proof}
Let $f_0 = \AND_n$ and $f_k$ be defined  inductively as $f_k := \AND_n \circ (\OR_n \circ f_{k-1})_\CS$, i.e., $f_k$ is the function obtain by composing $\AND_n$ with the cheat sheet version of $\OR_n$ composed with $f_{k-1}$. 

We prove the claim by induction on $k$. The induction hypothesis and the base case, $f_0 = \AND_n$, are presented below, where $N_k$ is the input size of the function $f_k$.
\begin{center}
\begin{minipage}[t]{0.5\textwidth}
\qquad \underline{Induction hypothesis ($f_k$)}
\begin{itemize}[noitemsep,topsep=2pt]
\item $N_k = O(n^{25^k})$
\item $D(f_k)=\tOmega (n^{2k+1})$
\item $R(f_k)=\tOmega (n^{2k+1})$
\item $C_0(f_k) = \tO(n^{k})$
\item $C_1(f_k) = \tO(n^{k+1})$
\item $\UC_0(f_k) = \tO(n^{k+1})$
\item $\UC_1(f_k) = \tO(n^{k+1})$
\end{itemize}
\end{minipage}%
\begin{minipage}[t]{0.3\textwidth}
\underline{Base case ($f_0 = \AND_n$)}
\begin{itemize}[noitemsep,topsep=5pt]
\item $N_0 = n$
\item $D(f_0) = n$
\item $R(f_0)=\Omega(n)$
\item $C_0(f_0) \leq 1$
\item $C_1(f_0) \leq n$
\item $\UC_0(f_0) \leq n$
\item $\UC_1(f_0) \leq n$
\end{itemize}\end{minipage}
\end{center}
The complexities of $f_0=\AND_n$ are straightforward to show and also follow from the general composition lemma (\lem{composition}) by letting $f$ be the one-bit identity function. Clearly the base case is consistent with the induction hypothesis.

We now show that the induction hypothesis for $f_k$ implies the same for $f_{k+1}$. First we upper bound the input size of $f_{k+1} = \AND_n \circ (\OR_n \circ f_{k})_\CS$. Since the input size of $f_k$ is $O(n^{25^k})$, the input size of $\OR_n \circ f_{k}$ is  $O(n^{25^k+1})$ and the input size of $(\OR_n \circ f_{k})_\CS$ is $O(n^{12(25^k+1)})$ (from \lem{cheat}). Hence the input size of $f_{k+1}$ is $O(n^{12(25^k+1)+1})=O(n^{12(25^k)+13})=O(n^{25^{k+1}})$.

The deterministic query complexity of $f_{k+1}$ can be lower bounded as follows:
\begin{align}
D(f_{k+1}) &= D( \AND_n \circ (\OR_n \circ f_{k})_\CS) = n D((\OR_n \circ f_{k})_\CS) 
= \Omega(n D(\OR_n \circ f_{k})) = \tOmega(n^{2k+3}),
\end{align} 
where we used \lem{composition} and \lem{cheat} to compute the relevant measures. The same calculation also works for $R(f_{k+1})$ up to log factors since $R(f)$ and $D(f)$ behave similarly in the aforementioned lemmas up to log factors.
Similarly using \lem{composition} and \lem{cheat} we have 
\begin{align}
C_0(f_{k+1}) &= C_0( \AND_n \circ (\OR_n \circ f_{k})_\CS) \leq  C_0((\OR_n \circ f_{k})_\CS) 
= \tO(C(\OR_n \circ f_{k})) = \tO(n^{k+1}) \   \mathrm{and}\\
C_1(f_{k+1}) &= C_1( \AND_n \circ (\OR_n \circ f_{k})_\CS) \leq n  C_1((\OR_n \circ f_{k})_\CS) 
= \tO(n C(\OR_n \circ f_{k})) = \tO(n^{k+2}).
\end{align} 

In these bounds we do not differentiate between $\log N_k$ and $\log n$ because they are asymptotically equal, since $\log N_k = 25^k \log n = O(\log n)$. Finally, using \lem{composition} and \lem{cheat} again we have 
\begin{align}
\UC_0(f_{k+1}) &= \UC_0( \AND_n \circ (\OR_n \circ f_{k})_\CS) \leq  \max \bigl\{\UC_0((\OR_n \circ f_{k})_\CS),n\UC_1((\OR_n \circ f_{k})_\CS) \bigr\}\nonumber\\
&= \tO\( \max \bigl\{\UC(\OR_n \circ f_{k}),nC(\OR_n \circ f_{k}) \bigr\} \)
= \tO(n^{k+2}) \qquad\mathrm{and}\\
\UC_1(f_{k+1}) &= \UC_1( \AND_n \circ (\OR_n \circ f_{k})_\CS) \leq n\UC_1((\OR_n \circ f_{k})_\CS)
\nonumber\\
&= \tO(n C(\OR_n \circ f_{k})) = \tO(\max\bigl\{ nC_0(f_k),C_1(f_k)\bigr\}
= \tO(n^{k+2}). 
\end{align} 
This completes the induction and establishes the first part of the theorem.

For the second part, since $R(f_k) = \tOmega(n^{2k+1})$ and $\UC(f_k) = \tO(n^{k+1})$, we have $R(f_k) = \tOmega(\UC(f_k)^{2-{\frac{1}{k+1}}})$. Since we treated $k$ as a constant, our notation hides constant and $\log n$ factors that depend only on $k$, i.e., we only get $R(f_k) \geq \(\UC(f_k)^{2-{\frac{1}{k+1}}}\)\Big/\({h_1(k)\log^{h_2(k)}n}\)$ for some functions $h_1(k)$ and $h_2(k)$. But we can always choose $k$ to be a slow growing function of $n$ so that these terms are negligible. This  yields the desired separation $R(f)\geq \UC(f)^{2-o(1)}$.
\end{proof}

Clearly \thm{DvsUC} and \thm{RvsUC} follow from this, which we restate for convenience.

\DvsUC*

\RvsUC*

\paragraph{Deterministic communication vs.\ partition number.}
We now show \thm{Dccvschi} by lifting the previous separation to communication complexity.

From \thm{RvsUC}, we have a function $f:\B^N \to \B$ such that $R(f)\geq \UC(f)^{2-o(1)}$, which implies $D(f)\geq \UC(f)^{2-o(1)}$. G\"o\"os, Pitassi, and Watson \cite{GPW15} show that for any function $f$, there is a corresponding communication problem $F$ such that 
\be 
\Dcc(F) = \Omega(D(f) \log N) = \Omega(D(f)).
\ee
On the other hand, as explained in \cite{GPW15}, we also have 
\be 
\log\chi(F) = O(\UC(f) \log N)=\tO(\UC(f)),
\ee
where we used the fact that our function has $N=n^{25^k}$, where $k$ is a slow growing function of $n$, and hence $\log N = 25^k \log n = O(\log^2 n) = O(\log^2 \UC(f))$. 

Since the conversion to communication complexity only weakens the result by log factors, the separation $D(f)\geq \UC(f)^{2-o(1)}$ immediately yields 
\be \Dcc(F)\geq \bigl(\log\chi(F)\bigr)^{2-o(1)},\ee 
 which establishes \thm{Dccvschi}:

\Dccvschi*

\section{Quantum query complexity vs.\ subcube partitions}
\label{sec:qquery}

In this section we establish \thm{QvsUC} and \thm{QvsUC1}.  To show this we require a function $\BKK$ from \cite[Theorem 10]{ABK15} with the following properties.
\begin{lemma}
\label{lem:bkk}
There exists a total function $\BKK:\B^{n^2} \to \B$ such that $C(\BKK)=\tO(n)$ and $Q(\BKK)=\tOmega(n^2)$.
\end{lemma}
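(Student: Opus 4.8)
The statement is \cite[Theorem~10]{ABK15}, so the plan is to recall that construction and check why the two bounds hold. The building block is the $\ksum$ problem: given a list of $N$ numbers from $[\poly(N)]$, decide whether some $k$ of them sum to $0$ modulo the range size. Two facts drive everything. First, a \emph{yes}-instance has a tiny certificate — point to the $k$ summands — so $C_1(\ksum)=\tO(1)$, while the quantum query complexity is $\widetilde{\Theta}(N^{k/(k+1)})$ by matching learning-graph and adversary bounds; taking $k=k(n)$ a slowly growing function of $n$ (think $k\approx\log n$) pushes this to $\widetilde{\Theta}(N)$, i.e.\ $\ksum$ admits essentially no quantum speedup. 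Second — and this is the obstacle — a \emph{no}-instance of $\ksum$ has no short certificate, $C_0(\ksum)=\tOmega(N)$, so a naive two-level composition of $\ksum$ would yield a function on $\widetilde{\Theta}(n^2)$ bits with certificate complexity $\tOmega(n^2)$, far from the target $\tO(n)$.

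The ``block'' version $\bksum$ is the device that repairs $C_0$ without harming $Q$: each number is packaged together with bookkeeping data (identifiers/pointers tying together the relevant elements) so that the function is declared $0$ the moment the bookkeeping is inconsistent, and the surviving ``consistent'' inputs are rigid enough that ``no $k$ of these sum to $0$'' is pinned down by only $\tO(1)$ blocks — giving $C_0(\bksum)=\tO(1)$ — while the bookkeeping is trivial on the hard instances feeding the adversary argument, so $Q(\bksum)=\widetilde{\Theta}(N^{k/(k+1)})$ survives. The function $\BKK$ is then the two-level composition of an outer $\bksum$ on $n$ blocks with an inner $\ksum$ on $\widetilde{\Theta}(n)$ elements per block, so the input length is $\widetilde{\Theta}(n^2)$ (rescaled to exactly $n^2$). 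The quantum lower bound follows from the composition theorem $Q(g\circ h)=\Theta(Q(g)Q(h))$ of \cite{HLS07,Rei11,LMR+11} (in its multi-output form), since each level contributes $\widetilde{\Theta}(n)$: $Q(\BKK)=\tOmega(n^2)$. The certificate upper bound follows by composing certificates in the spirit of \lem{composition}: a $1$-certificate and — crucially, thanks to the block trick — also a $0$-certificate of the outer $\bksum$ pins only $\tO(1)$ blocks to prescribed values, and certifying one block (an inner $\ksum$ instance on $\widetilde{\Theta}(n)$ numbers) costs $\tO(n)$ bits; hence $C_0(\BKK),C_1(\BKK)=\tO(n)$, the polylog slack being absorbed into the freedom in choosing $k=\omega(1)$. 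This is tight up to logs, since $Q(\BKK)\le D(\BKK)\le C_0(\BKK)C_1(\BKK)$.

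The main obstacle is designing the block bookkeeping so that both requirements hold simultaneously: it must be rigid enough that a \emph{no}-answer is witnessed by $\tO(1)$ blocks (to kill $C_0$), yet loose enough that it introduces no quantum shortcut, so that the adversary lower bound for $\ksum$ transfers verbatim to the \emph{total} function $\bksum$. Verifying this transfer, and verifying that $k$ may be taken as large as $\Theta(\log n)$ while keeping the $k$-dependent constants in the $\ksum$ lower bound negligible, is the delicate step; the remainder is the quantum composition theorem together with routine certificate-composition identities of the type in \lem{composition}.
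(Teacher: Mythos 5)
Your reconstruction has the right overall shape---and note that the paper itself proves \lem{bkk} purely by citation to \cite[Theorem~10]{ABK15}, whose construction is indeed an outer block-$k$-sum over $n$ inner $\ksum$ instances of size $\tO(n)$, with the quantum bound coming from the composition theorem and the certificate bound from composing certificates. But your description of what the block gadget accomplishes is not just imprecise, it is impossible. You claim the bookkeeping makes $C_0(\bksum)=\tO(1)$, while also keeping $C_1(\bksum)=\tO(1)$ (pointers to the $k$ summands) and $Q(\bksum)=\widetilde{\Theta}(N^{k/(k+1)})=\tOmega(N^{1-o(1)})$. These cannot coexist: $Q(\bksum)\le D(\bksum)\le C_0(\bksum)\,C_1(\bksum)$, so if both certificate complexities were polylogarithmic then $Q(\bksum)$ would be polylogarithmic too, destroying exactly the lower bound your composition step $Q(\BKK)=\tOmega(Q(\bksum)\,Q(\ksum))$ relies on. The true property of $\bksum$ is not that its certificates are short but that they are \emph{value-asymmetric}: a certificate may involve up to $n$ positions, yet only polylogarithmically many of them are asserted to carry the ``expensive'' value, the rest carrying the ``cheap'' one. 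This is precisely what the present paper records as \lem{ksum}, $C(\bksum_n\circ f)=O(nC_0(f)+C_1(f)\log^3 n)$ (with the roles of $0$ and $1$ swapped relative to the orientation used inside $\BKK$).

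Consequently your certificate accounting is wrong even though it lands on the right number: it is not ``$\tO(1)$ blocks pinned, each costing $\tO(n)$,'' but rather ``up to $n$ blocks pinned to the cheap side at cost $C_1(\ksum)=\tO(1)$ each, plus $\mathrm{polylog}(n)$ blocks pinned to the expensive side at cost $C_0(\ksum)=\tO(n)$ each,'' giving $C(\BKK)=\tO(n)$. Getting this asymmetry right is the whole content of the block construction, so as written the proposal misses the key idea of \cite{ABK15} rather than merely leaving a detail unverified. The remaining ingredients you invoke are fine and are the ones actually used: the Belovs--\v{S}palek adversary bound for $\ksum$ with $k$ a slowly growing function of $n$ (one must check, as you note, that the $k$-dependence tolerates this), and $Q(g\circ h)=\Omega(Q(g)Q(h))$ for the two-level composition.
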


We are now ready to prove \thm{QvsUC1}, restated for convenience:
\QvsUCone*

\begin{proof}
Let $f=\BKK$. Then using \lem{cheat} we know that $Q(f_\CS) = \Omega(Q(f)) = \tOmega(n^2)$ and $\UC_1(f_\CS) = O(C(f)\log^2 n) = \tO(n)$.
\end{proof}

To show \thm{QvsUC}, we need another function $\bksum$, which is a variant of the $\ksum$ problem. It has the interesting property that any certificates for it consists essentially of input bits set to $0$ and very few input bits set to $1$. As shown in the proof of \cite[Theorem 10]{ABK15}, we have the following. (More precisely, our version of $\bksum$ swaps the roles of zeros and ones compared to the function of \cite{ABK15}, but this does not affect its quantum query complexity.)

\begin{lemma}
\label{lem:ksum}
There exists a total function $\bksum:\B^{n} \to \B$ with $Q(\bksum)=\tOmega(n)$ such that for any function $f$, we have $C(\bksum\circ f) = O(nC_0(f)+C_1(f)\log^3 n)$.
\end{lemma}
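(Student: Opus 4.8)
The plan is to recall the block-$k$-sum function from the proof of \cite[Theorem 10]{ABK15} and to read off the two stated properties from its structure; the only genuinely new work is the composition bound, everything else being a citation. Following \cite{ABK15}, $\bksum$ is a variant of the $\ksum$ problem in which the $n$ input bits are partitioned into $k$ blocks, each block encoding a set of elements of $\mathbb{Z}_M$ with $kM=n$, and $\bksum(x)=1$ if and only if there is a choice of one element from each block summing to $0$ modulo $M$, for a carefully chosen hard family of admissible block contents. Here $k=k(n)$ is a slowly growing function of $n$, say $k=\Theta(\log n)$. Our $\bksum$ is obtained from the function of \cite{ABK15} by complementing every input coordinate; since complementation is a bijection on $\B^{n}$, this does not change $Q$, which is the remark preceding the lemma.

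For the quantum lower bound I would invoke the quantum adversary bound for $\ksum$ (Belovs and \v{S}palek, in the form used in \cite{ABK15}): the block-$k$-sum problem on $n$ bits has quantum query complexity roughly $n^{k/(k+1)}$. With $k=\Theta(\log n)$ we have $n^{1/(k+1)}=2^{\Theta(\log n)/\Theta(\log n)}=\Theta(1)$, so this is $\tOmega(n)$, any polylogarithmic slack arising from super-constant $k$ being absorbed into $\tOmega$; alternatively one simply quotes $Q(\bksum)=\tOmega(n)$ from \cite{ABK15}.

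For the certificate bound the key structural fact --- again from \cite{ABK15} --- is that every input of $\bksum$, whether a $0$-input or a $1$-input, admits a certificate that fixes at most $n$ coordinates in total and at most $O(\log^3 n)$ of them to the value $1$. A $1$-input is certified by exhibiting a summing-to-$0$ selection, that is the $O(k)=O(\log n)$ chosen coordinates, all equal to $1$; a $0$-input can be certified using coordinates set to $0$ only (up to polylogarithmically many exceptions), because fixing the absent elements already prevents any completion from containing a summing-to-$0$ all-present selection. Granting this, for an arbitrary $f$ one builds a certificate for $\bksum\circ f$ from a certificate of $\bksum$ by attaching, to each coordinate fixed to $0$, a smallest $0$-certificate of the corresponding copy of $f$ (of size $C_0(f)$), and to each coordinate fixed to $1$, a smallest $1$-certificate of that copy (of size $C_1(f)$); summing over the at most $n$ zero-coordinates and the at most $O(\log^3 n)$ one-coordinates shows that both $C_0(\bksum\circ f)$ and $C_1(\bksum\circ f)$ are $O\!\bigl(n\,C_0(f)+C_1(f)\log^3 n\bigr)$, which is the claim.

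The step I expect to be the real obstacle is exhibiting, as \cite{ABK15} do, a single total function on $n$ bits that is simultaneously (i) as hard as $\ksum$ for quantum algorithms, so that $Q(\bksum)$ is nearly the maximal value $\tOmega(n)$, and (ii) such that all of its certificates are sparse in $1$-bits. These requirements pull in opposite directions: the hardest $\ksum$ instances want rich inputs with no exploitable structure, whereas cheap $0$-certificates want an indicator-style encoding in which ``absence'' is the default and the non-existence of a solution can be witnessed by $0$s alone. Reconciling the two is precisely the content of the $\bksum$ construction of \cite{ABK15}; in the present proof it is a black box, and the new content is only the routine certificate bookkeeping above, together with the observation that swapping the roles of $0$ and $1$ does not disturb the quantum lower bound.
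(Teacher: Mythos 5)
Your proposal matches the paper's treatment: the paper gives no independent proof of this lemma, simply citing the $\bksum$ construction from the proof of \cite[Theorem 10]{ABK15} and noting that swapping the roles of zeros and ones leaves $Q(\bksum)$ unchanged, exactly as you do, and your added bookkeeping (certificates of $\bksum$ fix at most $n$ coordinates with only $O(\log^3 n)$ ones, then attach $C_0(f)$-size certificates to the $0$-coordinates and $C_1(f)$-size certificates to the $1$-coordinates) is the same argument by which that composition bound is obtained in \cite{ABK15}. Your guessed definition of the block structure need not be literally the one in \cite{ABK15}, but since you correctly treat the construction and its two properties as a black box, this does not affect the correctness of the proof.
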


In our construction, we repeatedly compose $\bksum$ with other functions and hence we need to understand the behavior of $\bksum$ under composition, analogous to \lem{composition} for $\AND$ and $\OR$.

\begin{lemma}[$\bksum$ composition]
\label{lem:ksumcomp}
For any function $f$, the following bounds hold:
\begin{itemize}[noitemsep,topsep=2pt]
\item $Q(\bksum_n\circ f)=\tOmega(n Q(f))$
\item $C(\bksum_n \circ f) = O(nC_0(f)+C_1(f)\log^3 n)$.
\item $\UC(\bksum_n \circ f) \leq n \UC(f)$
\end{itemize}
\end{lemma}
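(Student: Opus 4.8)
The plan is to prove \lem{ksumcomp} by treating each of its three bullets separately, in increasing order of difficulty. The quantum bound $Q(\bksum_n \circ f) = \tOmega(n Q(f))$ follows immediately from the general quantum composition theorem $Q(g \circ h) = \Theta(Q(g) Q(h))$ of \cite{HLS07,Rei11,LMR+11} together with the lower bound $Q(\bksum_n) = \tOmega(n)$ recorded in \lem{ksum}; there is nothing to do beyond citing these. Strictly speaking the composition theorem gives $Q(\bksum_n \circ f) = \Omega(Q(\bksum_n) Q(f))$ up to the usual log factors hidden in the adversary characterization, which is exactly the $\tOmega$ in the statement.

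The certificate complexity bound $C(\bksum_n \circ f) = O(n C_0(f) + C_1(f) \log^3 n)$ is, essentially by design, just a restatement of the second part of \lem{ksum}: that lemma already asserts $C(\bksum \circ f) = O(n C_0(f) + C_1(f) \log^3 n)$ for an arbitrary $f$, so I would simply invoke it. The only thing worth spelling out is the intuition behind it, which is the ``interesting property'' flagged in the text: a certificate for $\bksum$ on its $n$ input bits consists of (roughly) all $n$ bits when they are to be certified as $0$, but only polylogarithmically many bits when certifying the sparse $1$-positions. Propagating this through composition, certifying a $0$-input to $\bksum_n \circ f$ may require certifying up to $n$ of the inner $f$-instances (each at cost $C_0(f)$ if that instance evaluates to $0$, or $C_1(f)$ if it evaluates to $1$), while certifying a $1$-input requires certifying only $O(\log^3 n)$ of the inner instances to be $1$ — giving the claimed $O(n C_0(f) + C_1(f)\log^3 n)$.

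The unambiguous bound $\UC(\bksum_n \circ f) \le n \UC(f)$ is the one genuinely new piece. The clean way to get it is to note that $\bksum_n$ is a function on $n$ bits, so it trivially has $\UC_0(\bksum_n), \UC_1(\bksum_n) \le n$ — in fact one can fix a width-$n$ unambiguous DNF and CNF for it. Then I would use a general unambiguous composition fact: if $g$ has an unambiguous $b$-certificate set in which every certificate reads at most $w_0$ coordinates that are asserted $0$ and at most $w_1$ that are asserted $1$, then composing with $f$ yields $\UC_b(g\circ f) \le w_0 \UC_0(f) + w_1 \UC_1(f)$, by replacing each asserted inner value with the corresponding unambiguous $f$-certificate; unambiguity is preserved because in any input exactly one outer certificate is consistent, and for that one the inner certificates are themselves forced by unambiguity of $f$'s certificate set. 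Since every outer certificate reads at most $n$ coordinates in total, this gives $\UC_b(\bksum_n \circ f) \le n \max\{\UC_0(f), \UC_1(f)\} = n\,\UC(f)$ for both $b$, hence $\UC(\bksum_n \circ f) \le n\,\UC(f)$. The main obstacle — such as it is — is verifying that unambiguity composes correctly, i.e. that distinctness of outer certificates plus distinctness of inner certificates really does force a unique composed certificate per input; this is the same bookkeeping argument used in the proof of \lem{composition} for $\AND$ and $\OR$ and in the $\UC$ parts of \lem{cheat}, so I would handle it by that template rather than reinventing it.
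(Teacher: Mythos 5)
Your proposal is correct and follows essentially the same route as the paper: the first two bullets are handled exactly as in the paper (quantum composition theorem plus \lem{ksum}, and a direct invocation of \lem{ksum}), and your third bullet, once instantiated with the trivial full-assignment unambiguous certificates of an $n$-bit outer function, reduces to the paper's one-line argument that any $h\circ f$ can be unambiguously certified by exhibiting all $n$ inner outputs together with unambiguous certificates for each. The only quibble is in your inessential intuition for the second bullet: the $nC_0(f)$ versus $C_1(f)\log^3 n$ split reflects the structure of \emph{every} certificate of $\bksum$ (many asserted zeros, few asserted ones), not a dichotomy between $0$-inputs and $1$-inputs of the composed function.
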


\begin{proof}
The first lower bound follows because $Q(f\circ g) = \Theta(Q(f) Q(g))$ for any Boolean functions $f$ and $g$ \cite{HLS07,Rei11,LMR+11} and we have $Q(\bksum)=\tOmega(n)$ from \lem{ksum}. The second relation follows from \lem{ksum}. Lastly, $\UC(\bksum_n \circ f) \leq n \UC(f)$ because this holds for any $n$-bit function. Any function $h\circ f$ can be unambiguously certified by showing all the outputs to $f$ and providing unambiguous certificates for each output.
\end{proof}

We are now ready to establish the following theorem, which implies \thm{QvsUC}. This proof mimics the proof structure of \thm{main} and reuses several ideas.

\begin{theorem}
For every $k \geq 0$, there exists a total Boolean function $f_k:\B^{N_k}\to\B$, such that 
$Q(f_k)=\tOmega(n^{1.5k+1})$ and $\UC(f_k) = \tO(n^{k+1})$. Hence there is a function $f$ with $Q(f)\geq \UC(f)^{1.5-o(1)}$.
\end{theorem}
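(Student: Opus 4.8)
The plan is to mimic the inductive structure of \thm{main}, but with $\OR_n$ replaced by $\bksum_n$ and with a different seed function that already carries a quantum advantage over $\UC$. Specifically, I would set $f_0 = \AND_n \circ \BKK_\CS$ as the base case and define $f_k := \AND_n \circ (\bksum_n \circ f_{k-1})$ for $k \geq 1$. The reason for this choice of building blocks: each $\AND_n$ composition multiplies $Q$ by $\sqrt{n}$ (by \lem{composition}) while only adding $n$ to $C_0$ and multiplying $C_1,\UC$ by $n$; each $\bksum_n$ composition multiplies $Q$ by $n$ (by \lem{ksumcomp}) and, crucially, keeps $C_0$ growing only as $n C_0 + \tilde O(C_1)$, so that the certificate complexity stays small. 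Pairing one $\AND_n$ with one $\bksum_n$ per stage multiplies $Q$ by $n^{1.5}$ and $\UC$ by roughly $n$ per stage, which is exactly what yields the claimed $n^{1.5k+1}$ versus $n^{k+1}$ gap.

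The concrete steps: first, establish the base case $f_1 = \AND_n \circ \BKK_\CS$ explicitly. From \lem{bkk} and \lem{cheat}, $Q(\BKK_\CS) = \tOmega(n^2)$, $C_0(\BKK_\CS) = \tO(n)$, $C_1(\BKK_\CS) = \tO(n)$, $\UC(\BKK_\CS) = \tO(n)$, and $N_1 = \tO(n^{24})$. Then applying \lem{composition} with $\AND_n$: $Q(f_1) = \tOmega(\sqrt n \cdot n^2) = \tOmega(n^{2.5})$, $C_0(f_1) = \tO(n)$, $C_1(f_1) = \tO(n^2)$, $\UC(f_1) = \tO(n^2)$. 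This is the $k=1$ case of the induction hypothesis (matching $Q = \tOmega(n^{1.5k+1})$, $\UC = \tO(n^{k+1})$), and it already gives a power $5/4$ separation. Next, I would state an induction hypothesis for $f_k$ tracking $N_k$, $Q(f_k) = \tOmega(n^{1.5k+1})$, $C_0(f_k) = \tO(n^k)$, $C_1(f_k) = \tO(n^{k+1})$, $\UC(f_k) = \tO(n^{k+1})$, and verify that $f_1$ satisfies it.

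For the inductive step $f_k \mapsto f_{k+1} = \AND_n \circ (\bksum_n \circ f_k)$: the input size bound is routine from the $\bksum$ input size and an $\AND_n$ composition (no cheat sheet is applied at this stage, so the blowup is only polynomial and $\log N_k = \tilde O(\log n)$ is preserved). For $Q$: $Q(\bksum_n \circ f_k) = \tOmega(n \cdot n^{1.5k+1})$ by \lem{ksumcomp}, then $Q(f_{k+1}) = \tOmega(\sqrt n \cdot n^{1.5k+2}) = \tOmega(n^{1.5(k+1)+1})$ by \lem{composition}. For certificate complexity: by \lem{ksumcomp}, $C(\bksum_n \circ f_k) = O(n C_0(f_k) + C_1(f_k)\log^3 n) = \tO(n^{k+1})$; then \lem{composition} gives $C_0(f_{k+1}) \leq C_0(\bksum_n\circ f_k) = \tO(n^{k+1})$ — wait, this is already $\tO(n^{k+1})$ rather than $\tO(n^k)$, so I would need to be slightly more careful and instead track $C_0(\bksum_n \circ f_k)$ directly, noting $C_0(\AND_n \circ g) \leq C_0(g)$, so $C_0(f_{k+1}) = \tO(n^{k+1}) = \tO(n^{(k+1)})$ which matches the hypothesis $C_0(f_{k+1}) = \tO(n^{k+1})$. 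Hmm, that forces me to reconcile indices; the cleanest fix is to let the hypothesis be $C_0(f_k) = \tO(n^k)$ only for $k \geq 1$ with $C_0(f_1) = \tO(n)$, and check $C_0(f_{k+1}) \leq C_0(\bksum_n \circ f_k) = O(nC_0(f_k) + \tilde O(C_1(f_k))) = \tO(n^{k+1})$ — which is indeed $n^{(k+1)}$, consistent. Similarly $C_1(f_{k+1}) \leq n C_1(\bksum_n\circ f_k) \leq n\, C(\bksum_n\circ f_k) = \tO(n^{k+2})$, and $\UC(f_{k+1}) \leq n\,\UC(\bksum_n\circ f_k) \leq n^2 \UC(f_k) = \tO(n^{k+3})$ — too big, so here I must instead bound $\UC_0, \UC_1$ separately through the $\AND_n$ composition exactly as in \thm{main}, using $\UC(\bksum_n \circ f_k) \leq n\UC(f_k) = \tO(n^{k+2})$ from \lem{ksumcomp} and then $\UC_1(\AND_n \circ g) \leq n\UC_1(g)$, $\UC_0(\AND_n \circ g) \leq \UC_0(g) + n\UC_1(g)$, giving $\UC(f_{k+1}) = \tO(n^{k+3})$ still — so the real fix is that $\AND_n \circ (\bksum_n \circ f_k)$ should be replaced in the $\UC$ accounting by the observation that $\bksum_n$, like $\OR_n$, admits the structured unambiguous certificate where $\UC_1(\bksum_n \circ f) \le \UC_0(f) + n\UC_1(f)$-type bounds hold; I will need to invoke the structural property of $\bksum$ from \lem{ksum} (certificates are mostly zeros, few ones) to get the tighter $\UC$ recursion. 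This is the delicate bookkeeping point.

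The main obstacle, then, is getting the unambiguous certificate complexity to grow by only a factor of $n$ per stage rather than $n^2$. The resolution parallels the $\OR_n$ case in \lem{composition}: because a certificate for $\bksum$ consists (essentially) of many $0$-inputs and only $\tilde O(1)$ many $1$-inputs, an unambiguous certificate for $\bksum_n \circ f$ costs $n \cdot \UC_0(f) + \tilde O(\UC_1(f))$ for $0$-certificates and the symmetric bound for $1$-certificates, and then the outer $\AND_n$ contributes the final factor $n$ on the $1$-side. Combined with the inductively small $C_1(f_k)$, this keeps $\UC(f_k) = \tO(n^{k+1})$. Once the four quantities $Q, C_0, C_1, \UC$ close the induction, the final separation is immediate: $Q(f_k) = \tOmega(n^{1.5k+1}) = \tOmega(\UC(f_k)^{(1.5k+1)/(k+1)}) = \tOmega(\UC(f_k)^{1.5 - 1/(2k+2)})$, and taking $k$ to be a sufficiently slowly growing function of $n$ — so that the $\log$-factor and $k$-dependent constant losses (as in the closing paragraph of \thm{main}) are negligible — yields a total function $f$ with $Q(f) \geq \UC(f)^{1.5 - o(1)}$.
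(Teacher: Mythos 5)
There is a genuine gap, and it sits exactly at the point you flagged as ``the delicate bookkeeping point'': your inductive step defines $f_{k+1} = \AND_n \circ (\bksum_n \circ f_k)$ \emph{without} applying the cheat sheet construction, whereas the paper's function is $f_{k+1} = \AND_n \circ (\bksum_n \circ f_k)_\CS$. The per-stage cheat sheet is not optional bookkeeping; it is the mechanism that converts the \emph{plain} certificate bound of \lem{ksum}, $C(\bksum_n \circ f_k) = \tO(nC_0(f_k)+C_1(f_k)) = \tO(n^{k+1})$, into an \emph{unambiguous} bound via $\UC_1(g_\CS) = O(C(g)\log^2 N)$ from \lem{cheat}. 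Without it, the only available bound is $\UC(\bksum_n\circ f_k) \leq n\,\UC(f_k) = \tO(n^{k+2})$, and after the outer $\AND_n$ (which multiplies $\UC_1$ by $n$) you get $\tO(n^{k+3})$, losing a factor of $n$ per stage; the induction then fails and the separation degrades to the base-case power $5/4$ rather than approaching $3/2$.

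Your proposed repair --- a direct ``$\OR$-like'' unambiguous composition bound for $\bksum$, justified by the structural property that $\bksum$ certificates contain few ones --- does not rescue this, for two reasons. First, \lem{ksum} says nothing about unambiguous certificates, and making $\bksum$'s certificates unambiguous cheaply is precisely what is not known how to do directly; the paper's stated reason for using the cheat sheet framework is that it ``makes $1$-certificates unambiguous in a natural way,'' and even the deterministic/randomized construction in \thm{main} (with $\OR_n$, for which the unambiguous DNF of \eq{ordnf} exists) still cheat-sheets at every stage for exactly this reason. Second, even granting your hoped-for bound, it is stated in terms of $\UC_0(f_k)$ and $\UC_1(f_k)$, which are $\tO(n^{k+1})$, not in terms of $C_0(f_k)=\tO(n^k)$; unambiguity forces the inner sub-certificates to be unambiguous (otherwise two global certificates can accept the same input), so you cannot substitute the smaller $C_0$. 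Hence $\UC_1(\bksum_n\circ f_k)=\tO(n^{k+2})$ and $\UC(f_{k+1})=\tO(n^{k+3})$, and the induction still does not close. The rest of your argument (the base case $f_1 = \AND_n\circ\BKK_\CS$, the $Q$ and $C_0,C_1$ recursions, the input-size bound, and the final choice of slowly growing $k$) matches the paper, but to fix the $\UC$ recursion you must reinstate the cheat sheet at every stage, and correspondingly use $Q(g_\CS)=\Omega(Q(g))$ in the quantum lower bound.
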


\begin{proof}
Let $f_1 = \AND_n \circ \BKK_\CS$, where $\BKK$ is the function on $n^2$ bits in \lem{bkk}. Let $f_k$ be defined  inductively as $f_k := \AND_n \circ (\bksum_n \circ f_{k-1})_\CS$, i.e., $f_k$ is the function obtain by composing $\AND_n$ with the cheat sheet version of $\bksum_n$ composed with $f_{k-1}$. 

We prove the claim by induction on $k$. The induction hypothesis and the base case, $f_1 = \AND_n \circ \BKK_\CS$, are presented below, where $N_k$ is the input size of the function $f_k$.
\begin{center}
\begin{minipage}[t]{0.5\textwidth}
\qquad \underline{Induction hypothesis ($f_k$)}
\begin{itemize}[noitemsep,topsep=2pt]
\item $N_k = O(n^{25^k})$
\item $Q(f_k)=\tOmega (n^{1.5k+1})$
\item $C_0(f_k) = \tO(n^{k})$
\item $C_1(f_k) = \tO(n^{k+1})$
\item $\UC(f_k) = \tO(n^{k+1})$
\end{itemize}
\end{minipage}%
\begin{minipage}[t]{0.3\textwidth}
\underline{Base case ($f_1 = \AND_n \circ \BKK_\CS$)}
\begin{itemize}[noitemsep,topsep=4pt]
\item $N_1 = O(n^{25})$
\item $Q(f_1)=\tOmega(n^{2.5})$
\item $C_0(f_1) = \tO(n)$
\item $C_1(f_1) =\tO(n^2)$
\item $\UC(f_1) =\tO(n^2)$
\end{itemize}\end{minipage}
\end{center}

Let us first compute the complexities of $f_1 = \AND_n \circ \BKK_\CS$ and verify that they are consistent with the induction hypothesis. First note that the input size of $\BKK$ is $n^2$, and thus the input size of $\BKK_\CS$ is $O(n^{24})$ (from \lem{cheat}), and hence the input size of $f_1 = \AND_n \circ \BKK_\CS$ is $O(n^{25})$. We have $Q(f_1)=\tOmega(n^{2.5})$ since $Q(f_1)=\Omega(\sqrt{n}Q(\BKK_\CS))=\tOmega(n^{2.5})$. 
The other inequalities follow straightforwardly from \lem{composition} and \lem{bkk}.
We have $C_0(f_1) = \tO(n)$ since $C_0(\AND_n \circ \BKK_\CS) \leq C_0(\BKK_\CS)= \tO(n)$. We have $C_1(f_1) = \tO(n^2)$ since $C_1(\AND_n \circ \BKK_\CS) \leq nC_1(\BKK_\CS)= \tO(n^2)$. Lastly, we have  $\UC(f_1) = \UC(\AND_n \circ \BKK_\CS) \leq \UC_0(\BKK_\CS) + n\UC_1(\BKK_\CS) \leq D(\BKK_\CS) + \tO(nC(\BKK)) = \tO(D(\BKK)+nC(\BKK)) = \tO(n^2)$.

We now show that the induction hypothesis for $f_k$ implies the same for $f_{k+1}$. The input size calculation is identical to that in \thm{main} and hence we do not repeat it. 
The quantum  query complexity of $f_{k+1}$ can be lower bounded as follows:
\begin{align}
Q(f_{k+1}) &= Q( \AND_n \circ (\bksum_n \circ f_{k})_\CS) =\Omega(\sqrt{n} Q((\bksum_n \circ f_{k})_\CS) )
\nonumber \\
&=\Omega(\sqrt{n} Q(\bksum_n \circ f_{k}) )= \Omega(n^{1.5} Q(f_{k})) = \tOmega(n^{1.5(k+1)+1}),
\end{align} 
where we used \lem{composition}, \lem{cheat}, \lem{ksum}, and \lem{ksumcomp} to compute the relevant measures. 

Similarly we have
\begin{align}
C_0(f_{k+1}) &= C_0( \AND_n \circ (\bksum_n \circ f_{k})_\CS) \leq  C_0((\bksum_n \circ f_{k})_\CS) 
= \tO(C(\bksum_n \circ f_{k})) \nonumber \\
&= \tO(nC_0(f_k)+C_1(f_k)) = \tO(n^{k+1}) \qquad   \mathrm{and}\\
C_1(f_{k+1}) &= C_1( \AND_n \circ (\bksum_n \circ f_{k})_\CS) \leq n  C_1((\bksum_n \circ f_{k})_\CS) = \tO(n C(\bksum_n \circ f_{k}))  \nonumber \\
&= \tO(n^2C_0(f_k)+nC_1(f_k)) = \tO(n^{k+2}).
\end{align} 
Finally, using \lem{composition}, \lem{cheat}, \lem{ksum}, and \lem{ksumcomp} again we have 
\begin{align}
\UC(f_{k+1}) &= \UC( \AND_n \circ (\bksum_n \circ f_{k})_\CS) \nonumber \\
& = O\(  \max \bigl\{\UC_0((\bksum_n \circ f_{k})_\CS),n\UC_1((\bksum_n \circ f_{k})_\CS) \bigr\}\)\nonumber\\
&= \tO\( \max \bigl\{\UC(\bksum_n \circ f_{k}),nC(\bksum_n \circ f_{k}) \bigr\} \) \nonumber \\
&= \tO\( \max \bigl\{n\UC(f_{k}),n^2C_0(f_k)+nC_1(f_k) \bigr\} \)
= \tO(n^{k+2}).
\end{align} 
This completes the induction and establishes the first part of the theorem. Using a similar argument in \thm{main}, this yields a function with $Q(f)\geq \UC(f)^{1.5-o(1)}$.
\end{proof}

Finally, this establishes \thm{QvsUC}.

\QvsUC*

\section*{Acknowledgements}

I thank Shalev Ben-David for several helpful conversations and comments on a preliminary draft. 

\noindent This work is supported by ARO grant number W911NF-12-1-0486. 

\noindent This preprint is MIT-CTP \#4746.


\bibliographystyle{alpha}
\bibliography{partition}

\newcommand{\etalchar}[1]{$^{#1}$}
\begin{thebibliography}{LMR{\etalchar{+}}11}

\bibitem[Aar06]{Aar06}
Scott Aaronson.
\newblock Quantum certificate complexity.
\newblock {\em SIAM Journal on Computing}, 35(4):804--824, 2006.

\bibitem[Aar08]{Aar08}
Scott Aaronson.
\newblock Lower bounds for local search by quantum arguments.
\newblock {\em Journal of Computer and System Sciences}, 74(3):313--332, 2008.

\bibitem[ABK15]{ABK15}
Scott Aaronson, Shalev B{en-David}, and Robin Kothari.
\newblock Separations in query complexity using cheat sheets.
\newblock {\em arXiv preprint \eprint{arXiv:1511.01937}}, 2015.

\bibitem[AUY83]{AUY83}
Alfred~V. Aho, Jeffrey~D. Ullman, and Mihalis Yannakakis.
\newblock On notions of information transfer in {VLSI} circuits.
\newblock In {\em Proceedings of the 15th ACM Symposium on Theory of Computing
  (STOC 1983)}, pages 133--139, 1983.

\bibitem[BBBV97]{BBBV97}
Charles~H. Bennett, Ethan Bernstein, Gilles Brassard, and Umesh Vazirani.
\newblock Strengths and weaknesses of quantum computing.
\newblock {\em SIAM Journal on Computing (special issue on quantum computing)},
  26:1510--1523, 1997.

\bibitem[BdW02]{BdW02}
Harry Buhrman and Ronald de~Wolf.
\newblock Complexity measures and decision tree complexity: a survey.
\newblock {\em Theoretical Computer Science}, 288(1):21 -- 43, 2002.

\bibitem[GJPW15]{GJPW15}
Mika G{\"o}{\"o}s, T.S. Jayram, Toniann Pitassi, and Thomas Watson.
\newblock Randomized communication vs. partition number.
\newblock {\em Electronic Colloquium on Computational Complexity (ECCC)
  \href{http://eccc.hpi-web.de/report/2015/169/}{TR15-169}}, 2015.

\bibitem[G{\"o}{\"o}15]{Goo15}
Mika G{\"o}{\"o}s.
\newblock Lower bounds for clique vs. independent set.
\newblock {\em Electronic Colloquium on Computational Complexity (ECCC)
  \href{http://eccc.hpi-web.de/report/2015/012/}{TR15-012}}, 2015.

\bibitem[GPW15]{GPW15}
Mika G{\"o}{\"o}s, Toniann Pitassi, and Thomas Watson.
\newblock Deterministic communication vs. partition number.
\newblock {\em Electronic Colloquium on Computational Complexity (ECCC)
  \href{http://eccc.hpi-web.de/report/2015/050/}{TR15-050}}, 2015.

\bibitem[Gro96]{Gro96}
Lov~K. Grover.
\newblock A fast quantum mechanical algorithm for database search.
\newblock In {\em Proceedings of the 28th ACM Symposium on Theory of Computing
  (STOC 1996)}, pages 212--219, 1996.

\bibitem[GSS13]{GSS13}
Justin Gilmer, Michael Saks, and Srikanth Srinivasan.
\newblock Composition limits and separating examples for some {B}oolean
  function complexity measures.
\newblock In {\em Proceedings of 2013 IEEE Conference on Computational
  Complexity (CCC 2013)}, pages 185--196, June 2013.

\bibitem[HL{\v S}07]{HLS07}
Peter H{\o}yer, Troy Lee, and Robert {\v S}palek.
\newblock Negative weights make adversaries stronger.
\newblock In {\em Proceedings of the 39th ACM Symposium on Theory of Computing
  (STOC 2007)}, pages 526--535, 2007.

\bibitem[JK10]{JK10}
Rahul Jain and Hartmut Klauck.
\newblock The partition bound for classical communication complexity and query
  complexity.
\newblock In {\em Proceedings of the 2010 IEEE 25th Annual Conference on
  Computational Complexity}, CCC '10, pages 247--258, 2010.

\bibitem[JLV14]{JLV14}
Rahul Jain, Troy Lee, and Nisheeth~K. Vishnoi.
\newblock A quadratically tight partition bound for classical communication
  complexity and query complexity.
\newblock {\em arXiv preprint \eprint{arXiv:1401.4512}}, 2014.

\bibitem[Juk12]{Juk12}
Stasys Jukna.
\newblock {\em Boolean Function Complexity: Advances and Frontiers}.
\newblock Algorithms and Combinatorics. Springer, 2012.

\bibitem[KLdW15]{KLdW15}
Jedrzej Kaniewski, Troy Lee, and Ronald de~Wolf.
\newblock Query complexity in expectation.
\newblock In {\em Automata, Languages, and Programming}, volume 9134 of {\em
  Lecture Notes in Computer Science}, pages 761--772. 2015.

\bibitem[KLO99]{KLO99}
Eyal Kushilevitz, Nathan Linial, and Rafail Ostrovsky.
\newblock The linear-array conjecture in communication complexity is false.
\newblock {\em Combinatorica}, 19(2):241--254, 1999.

\bibitem[KN06]{KN06}
Eyal Kushilevitz and Noam Nisan.
\newblock {\em Communication Complexity}.
\newblock Cambridge University Press, 2006.

\bibitem[KRS15]{KRS15}
Robin Kothari, David R{acicot-Desloges}, and Miklos Santha.
\newblock {Separating Decision Tree Complexity from Subcube Partition
  Complexity}.
\newblock In {\em Approximation, Randomization, and Combinatorial Optimization.
  Algorithms and Techniques (APPROX/RANDOM 2015)}, volume~40 of {\em Leibniz
  International Proceedings in Informatics (LIPIcs)}, pages 915--930, 2015.

\bibitem[LM08]{LM08}
Sophie Laplante and Fr\'ed\'eric Magniez.
\newblock Lower bounds for randomized and quantum query complexity using
  {K}olmogorov arguments.
\newblock {\em SIAM Journal on Computing}, 38(1):46--62, 2008.

\bibitem[LMR{\etalchar{+}}11]{LMR+11}
Troy Lee, Rajat Mittal, Ben~W. Reichardt, Robert {\v S}palek, and Mario
  Szegedy.
\newblock Quantum query complexity of state conversion.
\newblock In {\em Proceedings of the 52nd IEEE Symposium on Foundations of
  Computer Science (FOCS 2011)}, pages 344--353, 2011.

\bibitem[LS07]{LS07}
Troy Lee and Adi Shraibman.
\newblock Lower bounds in communication complexity.
\newblock {\em Foundations and Trends® in Theoretical Computer Science},
  3(4):263--399, 2007.

\bibitem[Mon14]{Mon14}
Ashley Montanaro.
\newblock A composition theorem for decision tree complexity.
\newblock {\em Chicago Journal of Theoretical Computer Science}, (6), July
  2014.

\bibitem[Nis91]{Nis91}
Noam Nisan.
\newblock {CREW} {PRAMs} and decision trees.
\newblock {\em SIAM Journal on Computing}, 20(6):999--1007, 1991.

\bibitem[NS95]{NS95}
Noam Nisan and Mario Szegedy.
\newblock On the degree of {B}oolean functions as real polynomials.
\newblock {\em Computational Complexity}, 15(4):557--565, 1995.

\bibitem[Rei11]{Rei11}
Ben~W. Reichardt.
\newblock Reflections for quantum query algorithms.
\newblock In {\em Proceedings of the 22nd ACM-SIAM Symposium on Discrete
  Algorithms (SODA 2011)}, SODA '11, pages 560--569, 2011.

\bibitem[Sav02]{Sav02}
Petr Savick\'{y}.
\newblock On determinism versus unambiquous nondeterminism for decision trees.
\newblock {\em ECCC}, TR02-009, 2002.

\bibitem[{\v S}S06]{SS06}
Robert {\v S}palek and Mario Szegedy.
\newblock All quantum adversary methods are equivalent.
\newblock {\em Theory of Computing}, 2(1):1--18, 2006.

\bibitem[Tal13]{Tal13}
Avishay Tal.
\newblock Properties and applications of {B}oolean function composition.
\newblock In {\em Proceedings of the 4th Conference on Innovations in
  Theoretical Computer Science}, ITCS '13, pages 441--454, 2013.

\bibitem[Yao79]{Yao79}
Andrew Chi-Chih Yao.
\newblock Some complexity questions related to distributive computing.
\newblock In {\em Proceedings of the 11th ACM Symposium on Theory of
  Computing}, STOC '79, pages 209--213, 1979.

\end{thebibliography}


\end{document}